\DeclareMathOperator{\sexpD}{S-Exp}
\newtheorem{theorem}{Theorem}
\newtheorem{lemma}{Lemma}
\newtheorem{corollary}{Corollary}
\def\squiggly{\bgroup \markoverwith{\textcolor[rgb]{1,0,0}{\lower3.5\p@\hbox{\sixly \char58}}}\ULon}
\newcommand{\prob}[1]{{\mathbb P}}
\begin{document}
%\tableofcontents
%
% paper title
% Titles are generally capitalized except for words such as a, an, and, as,
% at, but, by, for, in, nor, of, on, or, the, to and up, which are usually
% not capitalized unless they are the first or last word of the title.
% Linebreaks \\ can be used within to get better formatting as desired.
% Do not put math or special symbols in the title.
\title{Covert, Low-Delay, Coded Message Passing\\ in Mobile (IoT) Networks
}
%
%
% author names and IEEE memberships
% note positions of commas and nonbreaking spaces ( ~ ) LaTeX will not break
% a structure at a ~ so this keeps an author's name from being broken across
% two lines.
% use \thanks{} to gain access to the first footnote area
% a separate \thanks must be used for each paragraph as LaTeX2e's \thanks
% was not built to handle multiple paragraphs
%

\author{Pei Peng, Emina Soljanin,~\IEEEmembership{Fellow,~IEEE}% <-this % stops a space
\thanks{P.~Peng and E.~Soljanin are with the Department
of Electrical and Computer Engineering, Rutgers, The State University of New Jersey, Piscataway, NJ 08854, USA, e-mail: \{pei.peng, emina.soljanin\}@rutgers.edu.

Some parts of Sec.~\ref{Sec:analysis_cp} and \ref{Sec:rw} of this paper appeared in the 2019 57th Annual Allerton Conf.\ on Communication, Control, and Computing \cite{peng2019straggling}.

Part  of  this  research  is  based  upon  work  supported  by the National Science Foundation under Grant No.~SaTC-1816404.
}% <-this % stops a space
}

% note the % following the last \IEEEmembership and also \thanks - 
% these prevent an unwanted space from occurring between the last author name
% and the end of the author line. i.e., if you had this:
% 
% \author{....lastname \thanks{...} \thanks{...} }
%                     ^------------^------------^----Do not want these spaces!
%
% a space would be appended to the last name and could cause every name on that
% line to be shifted left slightly. This is one of those "LaTeX things". For
% instance, "\textbf{A} \textbf{B}" will typeset as "A B" not "AB". To get
% "AB" then you have to do: "\textbf{A}\textbf{B}"
% \thanks is no different in this regard, so shield the last } of each \thanks
% that ends a line with a % and do not let a space in before the next \thanks.
% Spaces after \IEEEmembership other than the last one are OK (and needed) as
% you are supposed to have spaces between the names. For what it is worth,
% this is a minor point as most people would not even notice if the said evil
% space somehow managed to creep in.

% The paper headers
\markboth{submitted to IEEE Transactions on Information Forensics and Security, August~2021} {}
% The only time the second header will appear is for the odd numbered pages
% after the title page when using the twoside option.
% 
% *** Note that you probably will NOT want to include the author's ***
% *** name in the headers of peer review papers.                   ***
% You can use \ifCLASSOPTIONpeerreview for conditional compilation here if
% you desire.

% If you want to put a publisher's ID mark on the page you can do it like
% this:
%\IEEEpubid{0000--0000/00\$00.00~\copyright~2014 IEEE}
% Remember, if you use this you must call \IEEEpubidadjcol in the second
% column for its text to clear the IEEEpubid mark.

% use for special paper notices
%\IEEEspecialpapernotice{(Invited Paper)}

% make the title area
\maketitle

% As a general rule, do not put math, special symbols or citations
% in the abstract or keywords.
\begin{abstract}
We introduce a  gossip-like protocol for covert message passing between Alice and Bob as they move in an area watched over by a warden Willie. The area hosts a multitude of Internet of (Battlefield) Things  (Io$\beta$T) objects. Alice and Bob perform random walks on a random regular graph. The Io$\beta$T objects reside on the vertices of this graph, and some can serve as relays between Alice and Bob. The protocol starts with Alice splitting her message into small chunks, which she can covertly deposit to the relays she encounters. The protocol ends with Bob collecting the chunks. Alice may encode her data before the dissemination. Willie can either perform random walks as Alice and Bob do or conduct uniform surveillance of the area. In either case, he can only observe one relay at a time. We evaluate the system performance by the covertness probability and the message passing delay.  In our protocol, Alice splits her message to increase the covertness probability and adds (coded) redundancy to reduce the transmission delay. The performance metrics depend on the graph, communications delay, and code parameters. We show that, in most scenarios, it is impossible to find the design parameters that simultaneously maximize the covertness probability and minimize the message delay. 
\end{abstract}

% Note that keywords are not normally used for peerreview papers.
\begin{IEEEkeywords}
Covert communications, Random walks, Internet of (battlefield) things, delay reduction. 
\end{IEEEkeywords}

%\newpage

% For peer review papers, you can put extra information on the cover
% page as needed:
% \ifCLASSOPTIONpeerreview
% \begin{center} \bfseries EDICS Category: 3-BBND \end{center}
% \fi
%
% For peerreview papers, this IEEEtran command inserts a page break and
% creates the second title. It will be ignored for other modes.
\IEEEpeerreviewmaketitle

\section{Introduction}
\label{Sec:intro}

Hiding various aspects of communications is often essential. In wartime, communication between the suspected parties can alert the adversary even if the message is unknown.  In everyday life, revealing the identity of communicating parties affects the increasingly important anonymity and privacy.  Several recent papers addressed covert communications at the physical layer. There, two parties, Alice and Bob, communicate while observed by the warden Willie. An information-theoretic approach to achieving covertness, roughly speaking, relies on camouflaging messages as noise (see, e.g., \cite{bash2013limits, HidingInformationInNoise:BashGT15, CovertComm:Bloch16, DeniableComm:KadheJB14,wang2016fundamental} and references therein).  An extension of this model (see, e.g., \cite{sobers2017covert}) involves a jammer that can help Alice to transmit covertly \cite{shahzad2017covert}. Another extension involves a third participant Carol which acts as a cover for Bob, \cite{hu2018covert,hu2019covert} studies the covert communication in one-way relay networks. 
We here propose a covert message-passing protocol for IoT environments. This protocol is complementary and can be used in conjunction with the previously proposed methods for covert transmission.

The last decade has seen a wide variety of novel communications systems. Future 5G systems are supposed to host a hundred times more devices than current 4G environments, and one can potentially harness the resources expected to be brought in by smart (battlefield) devices in the future Internet of (Battlefield) Things (Io$\beta$T) environments. By exploiting these devices' storage and communication ability, Alice can covertly pass messages to Bob in a gossip-like manner as outlined below. 

Alice and Bob communicate over a wide geographic area (e.g., battlefield or an occupied city) patrolled by a warden Willie. The area hosts a multitude of IoT objects capable of storing, sending, and receiving data. Alice splits her message into small chunks, which she can inconspicuously pass, one at a time, to IoT objects that appear in her proximity as she randomly moves through the area. We refer to such IoT objects as relays or helpers. Bob, who also randomly moves through the area, can then retrieve the stored data chunks. Because the IoT objects are distributed over a wide area, Willie can only periodically check if any of these objects is transmitting or receiving data.

The movements of Alice and Bob are modeled as random walks on graphs. Information gathering and dissemination on graphs is an interesting problem that naturally arises and is actively researched in many different contexts. Examples include: border control using unmanned aerial vehicles (UAVs) \cite{girard2004border}, measuring traffic, reporting road conditions and helping with emergency response using UAVs \cite{puri2005survey}, monitoring the ocean \cite{paley2008cooperative}, measuring air pollution \cite{villa2016development}, multi-agent systems \cite{soljanin2010reducing}, and more recently for timely exchange of information updates \cite{tripathi2021age}.
We consider two detection patterns of Willie: random patrolling and uniform surveillance. In the patrolling model, Willie performs random walks on graphs and can detect the communication when he happens to be on at the same relay as Alice or Bob. In the surveillance model, Willie scans graph nodes (e.g., while sitting in a control room). He can detect communication when he observes Alice or Bob.

Having to distribute and collect many chunks, as well as 
the unpredictability of mobility and availability of helpers can cause significant delays in our mobile information transfer. To increase the persistence of information in the unstable environment, the agent may make the data chunks redundant by erasure correcting codes, requiring more data chunks to be distributed and collected. One would expect that would further increase the delay. However, that is not necessarily the case, and we will see that coding and some other forms of redundancy can, in fact, be used to reduce the delay, as previously shown to be the case in data download and straggler mitigation (see, e.g., \cite{joshi2012coding,scale:MehmetToN19,peng2020diversity,yu2020straggler,badita2020optimal,replication:AmirToN21} and references therein).

It may be helpful to the reader to consider our work in the context of the literature on throughput and delay in mobile ad-hoc communications.
For example, \cite{gupta2000capacity} proposed a model for studying the throughput scaling of static ad hoc networks; \cite{grossglauser2002mobility} introduced the mobility into the communications model and assumed the source only transfers the data packet to a nearby relay which subsequently passes the packet to the destination.
\cite{el2006optimal} introduced a random walk model for node mobility and analyzed the throughput and delay tradeoff, and \cite{codingTD:KongYS12} showed how coding improves this tradeoff, and \cite{im2020mobility} studied the throughput scaling of covert communication in ad-hoc networks  introduced in \cite{grossglauser2002mobility}. Our work is different in multiple ways. It considers an  Io$\beta$T specific system model, which includes different communication protocols and performance metrics. However, our approach to deriving communications delay and the techniques for reducing it should be of interest to the mobile ad-hoc communications community.

The contributions of this paper are summarized as follows:
\begin{enumerate}
\item We first propose a gossip-like protocol for a covert dissemination/collection of message chunks in IoT environments. We analyze the dissemination and collection delay for two chunk transmission time models. For both models, our theoretical and numerical results show that introducing redundant IoT relays can reduce the dissemination time, and erasure coding of messages can reduce the collection time.

\item We then extend our analysis to covert communication scenarios. We introduce two warden models and derive/analyze the covertness probability for each. The theoretical and numerical results show that coding always reduces the covertness probability while splitting the message into smaller chunks may increase the covertness probability. 

\item We numerically analyze delay vs.\ the covertness probability tradeoff. We conclude that the tradeoff is very different for different system parameters. In some scenarios, there exists an optimal code rate that maximizes the covertness probability and minimizes delay. However, in most scenarios, simultaneously maximizing the covertness probability and minimizing the communications delay is impossible.
\end{enumerate}

The paper is organized as follows: In Sec.~\ref{Sec:sys}, we present the message passing model and two delay models. In Sec.~\ref{Sec:messagedelay}, we derive and analyze the expressions of dissemination/collection time for different delay models and numerically analyze the message passing delays.
In Sec.~\ref{Sec:covertpass}, we propose two warden detection models and point out the tradeoff that exists between the message passing delay and covertness probability. 
In Sec.~\ref{Sec:analysis_cp}, we derive and analyze the expressions of covertness probability for different detection models.
In Sec.~\ref{Sec:tradeoff}, we present some numerical results for the delay vs. covertness probability tradeoff. Conclusions are given in Sec.~\ref{Sec:conclusion}.

\section{System Model and Problem Formulation}
\label{Sec:sys}

We first describe the communication participants, their mobility, and  
message passing protocols and delay models. We then define two message passing performance metrics we will be studying. This section focuses the message passing delay. The message passing covertness and the delay/covertness tradeoff are discussed in Sec.~\ref{Sec:covertpass}. 

\subsection{Communication Participants and their Mobility}
There are three types of participants present within a geographical area (e.g., a city or a battlefield): a mobile source Alice, $r$ static relays, a mobile receiver Bob. Alice's and Bob's mobility is modeled as a simple random walk over a {\it mobility graph}. (Sec.~\ref{Sec:covertpass} extends this model to include a warden Willie.)

We model the mobility graph as a random $d$-regular graph, that is, a graph selected uniformly at random from the set %$\mathcal{G}_{n,r}$ 
of all $d$-regular graphs on $n$ vertices. 
The related literature (see, e.g., \cite{el2006optimal, codingTD:KongYS12}) uses rectangular grid graphs over a torus to study communications in (mobile) ad-hoc networks. We adopt the random regular graph model for the following reasons. 1) Since these graphs are locally trees with high probability \cite{cooper2009multiple}, the random walks on such graphs are reasonable mobility models. 2) Although random regular graphs are more realistic than the grid on torus graph for mobility modeling, random walks on these graphs are fairly well understood. Thus, the existing mathematical results on this topic can be used and easily extended to our scenarios, allowing us to concentrate on the communications and covertness problems we aim to solve. Other mobility graphs will be used in future research, especially irregular graphs with a few high degree hubs.

The $r$ relays are IoT objects residing on $r$ vertices of the mobility graph.  Alice has a message to pass to Bob. Instead of communicating with Bob directly, Alice uses the IoT objects (relays) to store pieces of her message, which Bob can subsequently retrieve. Because IoT objects have limited power, Alice and Bob can only communicate with a relay when they reach the vertex where the relay resides. 

%Initially, no two participants are closer than $\log \log v $ steps. \textcolor{red}{Reviewers will ask why.}

\subsection{Communications Protocol \& the Mobility Model}

Alice's \textit{message} has length $m$ bits (symbols, packets). Because of the relay's storage constraints and to avoid long transmission time at a relay, she partitions the message into $k$ {\em data chunks} each of length $\ell=m/k$. By using an MDS code, she encodes the data chunks into $n(\le r)$ {\em coded chunks} (also of length $\ell$). Thus, Bob needs to collect any $k$ out of $n$ coded chunks to recover the message. An example of message partitioning and encoding, together with the mobility graph, is shown in Fig.~\ref{fig:sysmodel}. Here, Alice splits the message into $k=2$ data chunks and encodes them into $n=3$ coded chunks $\{M_1,M_2,M_3\}$.
\begin{figure}[hbt]
    \centering
    \includegraphics[width=0.48\textwidth]{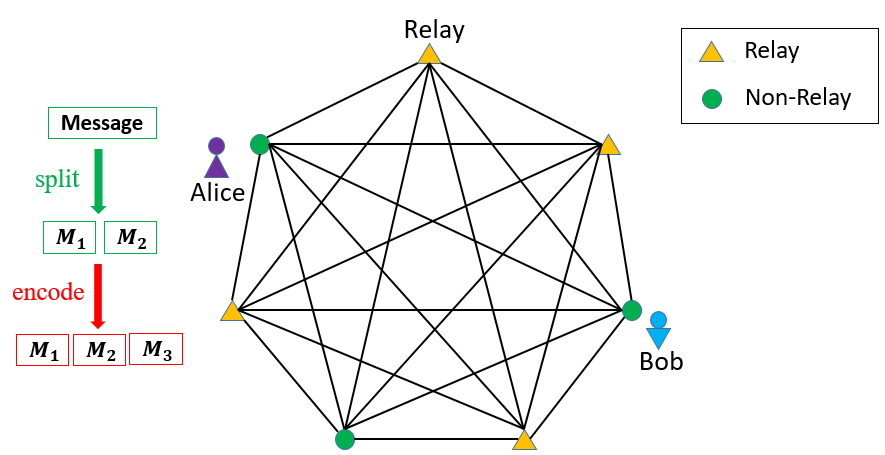}
    \caption{An example of message passing over a complete graph with with $v=7$ vertices. Each vertex contains an IoT object. Four among them (yellow triangles) are relays and three (green disks) are not. Alice and Bob can communicate with any object but only disseminate/collect message chunks to/from the relays. }
    \label{fig:sysmodel}
\end{figure}

Message passing from Alice to Bob has two phases: the dissemination phase and the collection phase. In the dissemination phase, Alice transfers $n$ coded chunks to the first $n$ relays she encounters as she randomly walks through the graph. This model is similar to a classic predator-prey model, e.g., as analyzed in  \cite{cooper2009multiple}. Fig.~\ref{fig:dismodel} shows an example. When she meets the first relay, she deposits the coded chunk $M_1$ to the relay. She then deposits $M_2$ to the second relay, and so on. The dissemination is complete when Alice deposits the last chunk. Fig.~\ref{fig:dismodel} shows the snapshots of Alice at the three relays. She may need to walk through several non-relay vertices to reach all the relays.
 \begin{figure}[hbt]
    \centering
    \includegraphics[width=0.48\textwidth]{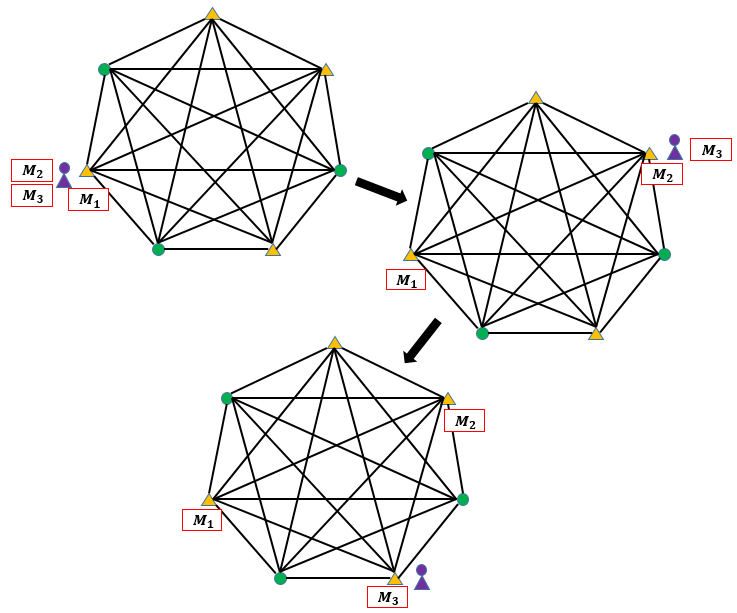}
    \caption{Alice needs to disseminate three coded chunks to any three out of four relays. Every time when she meets a relay, she stores one coded chunk, until all three coded chunks are disseminated. Each relay can only store one coded chunk.}
    \label{fig:dismodel}
\end{figure}
Similarly, Bob recovers the message in the collection phase by collecting $k$ coded chunks from the $n$ relays as he randomly walks through the graph. 
 %Note that we assume without loss of generality that the encoding is such that any $k$ out of $n$ coded chunks are sufficient for message recovery. An example is shown in Fig.~\ref{fig:colmodel}. When Bob meets the first relay which has an coded chunk, he retrieves the chunk $M_2$. Then when he meets the second relay which has a different coded chunk, he can retrieve the chunk $M_3$. Since Alice applies a $[4,2]$ MDS code, the two coded chunks are enough for Bob to recover the message.

\subsection{Message Chunk Passing Time}
\label{subsec:chucktrans}

The time to pass a chunk between Alice (or Bob) and a relay has two components. The first is the time Alice (or Bob) needs to spend randomly walking to meet a relay. The second is the transmission time of a data chunk between Alice/Bob and a relay. We refer to Alice's move from a vertex to one of its neighbors as {\it one step}. The time to take a step is, in general, a random variable $\eta$.
We consider the following two chunk passing time models:  

\noindent \ul{Model 1} -- {\it Constant Transmission Time:} \space
 In this classic model for random walks (see, e.g., \cite{cooper2009multiple,alon2008many} and references therein), the chunk passing time is measured in the number of steps Alice/Bob needs to make to encounter a relay. It is applicable here under the following circumstances: 1) the time of a step $\eta$ is a constant, say $1$ and 2) the chunks are sufficiently small so that their transmission time can be neglected. 

\noindent \ul{Model 2} -- {\it Random Transmission Time:} \space
In this general model, the step times are independent and identically distributed (i.i.d.) random variables. The transmission time of a data chunk between Alice/Bob and a relay follows a shifted exponential distribution whose shift is proportional to the length of a chunk $\ell$ and whose tail accounts for various disturbances (noise) in the system.

\subsection{Covert Message Passing}
\label{subsec:syscovert}
The covert message passing is defined as Alice successfully passing a message to Bob through some relays without Willie's detection. We consider two detection patterns of Willie: random patrolling and uniform surveillance. In the patrolling model, Willie performs random walks on graphs and can detect the communication when he happens to be on at the same relay as Alice or Bob. In the surveillance model, Willie scans graph nodes (e.g., while sitting in a control room). He can detect communication when he observes Alice or Bob. The details of the detection models are given in Sec.~\ref{Sec:covertpass}.

\subsection{Performance Metrics and Problem Formulation}
\label{sec:metrics}
Alice's goal is to covertly and quickly pass the message to Bob. Thus, 
the performance metrics of interest are \emph{covertness probability} ($P_c$), the \emph{expected dissemination time} ($\mathbb{E}[T_A]$) of $n$ coded chunks by Alice, the \emph{expected collection time} ($\mathbb{E}[T_B]$) of $k$ coded chunks by Bob, and the  \emph{expected message passing delay} ($\mathbb{E}[T_{A+B}]$) (dissemination plus collection). We evaluate the above metrics in terms of the design parameters $k$ and $n$, where $k/n$ is the code rate. Our goal is to find the optimal $k$ and $n$ that maximize $P_c$, and minimize $\mathbb{E}[T_A]$, $\mathbb{E}[T_B]$, and $\mathbb{E}[T_{A+B}]$.
In the following section, we will derive the formulas for each metric and analyze the optimal $k$ and $n$. The covertness probability will be addressed in Sec.~\ref{Sec:covertpass}.

Recall that a {\it message} of length $m$ data bits, and is split into $k$ {\it data chunks}. Applying an $[n,k]$ MDS code, $k$ data chunks are encoded into $n$ {\it coded chunks}. The length of each data/coded chunk is $\ell=m/k$ bits. Therefore, Alice needs to meet $n$ relays to deposit her coded chunks. We denote by $S_A$ the random number of steps Alice needs to make to meet $n$ out of $r$ relays. Bob needs to meet $k$ relays with coded chunks. We denote by $S_B$ the random number of steps Bob needs to make to meet $k$ out of $n$ relays storing coded chunks.

\subsection{Parameters and Notation}
\begin{center}
\begin{small}
    \begin{tabular}{rcl}
       $v$ & -  &  number vertices in the mobility graph\\
       $d$ & - & mobility graph vertex degree\\
       $r$ & -  & number of relays in the system\\
       $m$  & -  & message (data) length in bits\\
       $k$  & -  & number of message (data) chunks \\
    $n$  & -  & number of coded chunks \\
    $k/n$  & -  & code rate \\
    $\ell$  & -  & length of a data (or coded) chunk\\ 
    $\eta$  & -  & random walking time between two vertices\\ 
    $T_A$ & -  & random dissemination time of $n$ coded chunks\\
    $T_B$ & -  & random collection time of $k$ coded chunks\\
    $T_{A+B}$ & -  & random message passing time\\
    $S_A$ & -  & random \# of Alice's steps to meet $n$ relays\\
    $S_B$ & -  & random \# of  Bob's steps to meet $k$ relays
    \end{tabular}
    \end{small}
\end{center}

Observe that $v$, $d$, $r$, $m$ and $\mathbb{E}[\eta]$ are given system parameters. $\mathbb{E}[T_A]$, $\mathbb{E}[T_B]$, $\mathbb{E}[T_{A+B}]$, $\mathbb{E}[S_A]$ and $\mathbb{E}[S_B]$ are the performance metrics. $n$, $k$ and $\ell$ are the design parameters which we can select to optimize the performances metrics.

\section{Message Passing Delays}
\label{Sec:messagedelay}
\subsection{Preliminary Reasoning}
Message passing delay includes dissemination and the collection time. It is the the time during which Alice and Bob communications with relays could be discovered by Willie. Observe that the message passing delay is not the time that the message spends in the system from the beginning of Alice's dissemination to the end of Bob's collection. That time depends on the starting times of the dissemination and collection, and is beyond the scope of this paper.

The message passing delay depends on how many steps Alice (Bob) needs to make to disseminate (collect) the coded chunks.To understand the message passing delay dependencies on the system and design parameters, we first consider a special case when the mobility graph is a complete graph. 
Here, Alice needs to disseminate $n$ coded chunks to $r$ relays residing on a $v$-vertex complete graph. Her first chunk can be deposited on any of the $r$ relays. In a single step, Alice arrives at a relay with the probability $r/v$, and thus she needs to make on average $v/r$ steps to find a relay to deposit her first chunk. For depositing her second chunk, Alice needs to arrive to one of the remaining $r-1$ relays, which happens in a single step with probability $(r-1)/v$. Thus, she has to make another $v/(r-1)$ steps on average to deposit her second chunk. Therefore, to deposit her first two chunks, Alice will make  $\frac{v}{r}+\frac{v}{r-1}$ steps on average. Following this reasoning, we see that for $n$ chunks, Alice needs to makes on average $v(H_r-H_{r-n})$ steps, where $H_{r}=\sum_{i=1}^{r}1/i$ is the $r$-th harmonic number. We often use the approximation $H_n=\log n+\gamma + \mathcal{O}(n^{-1})$, where
$\gamma = 0.577$ is Euler's constant.  Bob needs to collect $k$ chunks from $n$ relays storing coded chunks. Therefore, Bob needs to make on average $v(H_n-H_{n-k})$ steps.

The above reasoning does not extend to general graphs. However, reference \cite{cooper2009multiple} provides useful results for large, random regular graphs (which are good mobility graphs models). We adapt the findings of \cite{cooper2009multiple} to our setting, and get the following results. Let $S$ be the number of steps that the source (or collector) needs to make to meet a relay. When we randomly choose a graph from the set of all $d$-regular graphs with $v$ vertices, with high probability, 1) the expected number of steps is 
\begin{equation}
\label{Eq:slot}
\mathbb{E}[S]\sim_{v}\frac{\theta_d v}{r}
\end{equation}
and 2) the probability that the source meets a relay $P_{meet}$ is
\begin{equation}
\label{Eq:meet}
P_{meet}\sim_{v}\frac{r}{\theta_d v}.
\end{equation}
where $X\sim_{Z}Y $ means $\lim_{Z\rightarrow \infty}X/Y=1$, and $\theta_{d}=\frac{d-1}{d-2}$.

We can now find the the expected message passing delay. According to the definition of $T_{A+B}$, we know that its expectation is $\mathbb{E}[T_{A+B}]=\mathbb{E}[T_A] + \mathbb{E}[T_B]$. The dissemination time $T_A$ is the sum of $n$ message chunk passing times (discussed in Sec.~\ref{subsec:chucktrans}). For the constant transmission time model with the step time $\eta=1$, $T_A$ is equal to the random number of steps $S_{A}$ that Alice needs to make to meet $n$ out of $r$ relays to deposit her $n$ coded chunks. Therefore, the expected dissemination time is $\mathbb{E}[T_{A}]=\mathbb{E}[S_{A}]$. Similarly, the collection time $T_B$ is equivalent to the total number of steps $S_{B}$ that Bob needs to make to meet $k$ out of $n$ relays to collect $k$ coded chunks. Therefore, the expected dissemination time is $\mathbb{E}[T_{B}]=\mathbb{E}[S_{B}]$.

For the random transmission time model, the dissemination time $T_A$ (collection time $T_B$) is again equal to the sum of $n$ ($k$) chunk passing times. Recall that the chunk transmission time can not be neglected in this model, and it follows a shifted exponential distribution. Thus the calculations of $\mathbb{E}[T_A]$ and $\mathbb{E}[T_B]$ are more complicated and will be discussed later. 

In the following, we focus on the delay analysis of a random regular graph. We will find the expression of message passing delay by deriving the dissemination time and collection time separately under two different delay models.

\subsection{Constant Transmission Time}
\label{Sec:random}
Under the constant transmission time model, the message passing delay $T_{A+B}=S_{A}+S_{B}$, where $S_{A}$ and $S_{B}$ are the number of steps that the source and collector need to spend to meet enough relays. 

\subsubsection{Dissemination Time}
From the above discussion, it follows that in a $d$-regular graph with $v$ vertices, the source needs to make on average $\frac{\theta_d v}{r}$ steps to disseminate the first coded chunk. After the first dissemination, the number of available relays reduces to $r-1$, and then the source needs to spend on average $\frac{\theta_d v}{r-1}$ steps to disseminate the second chunk. 
Finally, we can get the expected number of steps to disseminate $n$ coded chunks
\[
\mathbb{E}[S_{A}]\sim_{v}\theta_d v (H_r-H_{r-n}).
\]
It is obvious that $\mathbb{E}[S_{A}]$ decreases with increasing $r$ and increases with $n$. Since $n=\varphi r$ for some $\varphi\in(0,1)$, it is not hard to see $\mathcal{O}(\mathbb{E}[S_{A}])=\mathcal{O}(v)$.

For a scenario without redundant relays, the number of relays $r$ is equal to the number of coded chunks $n$. Thus, the expected number of steps to disseminate $n$ coded chunks is $\mathbb{E}[S_{A}]\sim_{v}\theta_d v H_n$. Namely, we have  $\mathcal{O}(\mathbb{E}[S_{A}])=\mathcal{O}(v\log n)$.

\subsubsection{Collection Time}
The collector needs to collect any $k$ coded chunks from $n$ relays that store the chunks. Similarly, we can get the expected number of steps to collect any $k$ chunks
\[
\mathbb{E}[S_{B}]\sim_{v}\theta_d v (H_n-H_{n-k}).
\]
It is obvious that $\mathbb{E}[S_{B}]$ decreases with increasing $n$ and increases with $k$. Since $k=\xi n$ for some $\xi\in(0,1)$, it is not hard to see $\mathcal{O}(\mathbb{E}[S_{B}])=\mathcal{O}(v)$.

For a scenario without coding, the code rate $k/n=1$, namely, $k=n$. Thus, the expected number of steps to collect $n$ data chunks is $\mathbb{E}[S_{B}]\sim_{v}\theta_d v H_n$. Namely, we have  $\mathcal{O}(\mathbb{E}[S_{B}])=\mathcal{O}(v\log n)$.

\subsubsection{Message Passing Delay}
The message passing delay is the sum of dissemination time and collection time. Since both $\mathbb{E}[S_{A}]$ and $\mathbb{E}[S_{B}]$ are positive, the following asymptotic equivalence holds:
\begin{equation}
\label{Eq:small}
\mathbb{E}[T_{A+B}]\sim_{v}\theta_d v (H_r+H_n-H_{r-n}-H_{n-k}).
\end{equation}
We know that $\mathbb{E}[T_{A+B}]$ increases with $k$. Namely, when the number of coded chunks $n$ is given, $\mathbb{E}[T_{A+B}]$ reaches minimum at $k=1$. However, since the IoT devices have limited storage, we sometimes have to split the original message into smaller chunks to store in the relays. Therefore, it is important to study when $k$ is given, what is the value of $n$ can minimize $\mathbb{E}[T_{A+B}]$. We find some results in Theorem~\ref{Lm:RW-opt-n}.

\begin{theorem}
\label{Lm:RW-opt-n}
By using the asymptotic equivalence \eqref{Eq:small}, the expected message passing delay $\mathbb{E}[T_{A+B}]$ reaches the minimum at $n=\sqrt{rk+k}-1$ given $r$ and $k$. Notice that if the above value is not an integer, the optimal $n$ is $\left \lceil{\sqrt{rk+k}-1} \right \rceil$ or $\left \lfloor{\sqrt{rk+k}-1} \right \rfloor$.
\end{theorem}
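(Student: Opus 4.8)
The plan is to treat the minimization as a genuinely \emph{discrete} optimization problem and to locate the minimizer through a forward-difference argument rather than a continuous derivative. Since $\theta_d v>0$ and $H_r$ are both independent of $n$, minimizing the right-hand side of \eqref{Eq:small} over the feasible range $k\le n\le r$ is equivalent to minimizing
\[
f(n) = H_n - H_{r-n} - H_{n-k}.
\]
Because the $H_i$ are harmonic numbers, they are discrete objects, so I would work with the exact forward difference $f(n+1)-f(n)$, using the recurrence $H_m-H_{m-1}=1/m$ termwise rather than approximating $H_i\approx\log i+\gamma$ and differentiating.

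First I would compute
\[
f(n+1)-f(n) = \frac{1}{n+1} + \frac{1}{r-n} - \frac{1}{n+1-k}.
\]
To justify that a single sign change of this difference pins down the minimizer, I would establish that $f$ is unimodal (discretely convex) by rewriting the difference as
\[
f(n+1)-f(n) = \frac{1}{r-n} - \frac{k}{(n+1)(n+1-k)},
\]
and observing that on the range $k\le n\le r-1$ both pieces are increasing in $n$: the term $1/(r-n)$ grows, while the subtracted term shrinks because its denominator $(n+1)(n+1-k)$ grows. Hence the forward difference is strictly increasing, it is negative for small $n$ and positive for large $n$, and $f$ decreases then increases exactly once.

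Next I would solve $f(n+1)-f(n)=0$. Combining the left side of $\frac{1}{n+1}+\frac{1}{r-n}=\frac{1}{n+1-k}$ gives $\frac{r+1}{(n+1)(r-n)}=\frac{1}{n+1-k}$; cross-multiplying and cancelling the common $(r+1)(n+1)$ terms on both sides collapses everything to $(n+1)^2=(r+1)k=rk+k$, i.e. $n=\sqrt{rk+k}-1$. By the convexity established above, the integer minimizer is one of the two integers adjacent to this real critical point, so comparing $f$ at $\lfloor\sqrt{rk+k}-1\rfloor$ and $\lceil\sqrt{rk+k}-1\rceil$ yields the stated conclusion.

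The step requiring the most care --- and the reason a continuous argument does not suffice --- is precisely the choice of the discrete difference over the derivative. If one instead approximated $H_i\approx\log i+\gamma$ and set $f'(n)=\tfrac1n+\tfrac1{r-n}-\tfrac1{n-k}=0$, the algebra would collapse to $n^2=rk$, giving $n=\sqrt{rk}$ and missing both the $+k$ under the radical and the $-1$ shift; the exact forward difference is what produces the $(r+1)$ factor and hence the correct optimum. Establishing the monotonicity that legitimizes the single-crossing conclusion is therefore the main technical point, with the endpoint feasibility ($k\le n\le r$ keeping all harmonic arguments nonnegative) a minor check.
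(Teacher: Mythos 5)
Your proposal is correct and follows essentially the same route as the paper: both compute the exact forward difference $\mathbb{E}[T_{A+B}](n+1)-\mathbb{E}[T_{A+B}](n)$, reduce it to the sign of $(n+1)^2-(rk+k)$, and conclude that the optimum sits at $n=\sqrt{rk+k}-1$ (or an adjacent integer). Your additional monotonicity/discrete-convexity step is sound but not needed, since the paper's factorization $\frac{\theta_d v}{(n+1)(r-n)(n+1-k)}\left[(n+1)^2-(rk+k)\right]$ already gives the sign of the difference for \emph{every} $n$ in the feasible range, which is exactly the unimodality you argue for separately.
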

\begin{proof}
When $k$ and $r$ are given, $\mathbb{E}[T_{A+B}]$ is a function of $n$. Since $\mathbb{E}[T_{A+B}]$ is discrete, we can get its minimum by find an $n^*$ where $\mathbb{E}[T_{A+B}](n=n^*)\le \mathbb{E}[T_{A+B}](n=n^*+1)$ and $\mathbb{E}[T_{A+B}](n=n^*)\le \mathbb{E}[T_{A+B}](n=n^*-1)$. From \eqref{Eq:small}, we have $\mathbb{E}[T_{A+B}(n=i)]\sim_{v}\theta_d v (H_r+H_i-H_{r-i}-H_{i-k})$ and $\mathbb{E}[T_{A+B}(n=i+1)]\sim_{v}\theta_d v (H_r+H_{i+1}-H_{r-i-1}-H_{i+1-k})$.

Define $A_{n=i}=\mathbb{E}[T_{A+B}(n=i+1)]-\mathbb{E}[T_{A+B}(n=i)]$,
then
\begin{small}
\begin{align*}
    A_{n=i}&\sim_{v} \theta_d v (\frac{1}{i+1}+\frac{1}{r-n}-\frac{1}{n+1-k})\\
    &= \frac{\theta_d v}{(i+1)(r-n)(n+1-k)}(n^2+2n+1-rk-k)\\
    &=\frac{\theta_d v}{(i+1)(r-n)(n+1-k)}[(n+1)^2-(rk+k)].
\end{align*}
\end{small}
Since $\frac{\theta_d v}{(i+1)(r-n)(n+1-k)}>0$, $A_{n=i}\ge0$ for $n\ge \sqrt{rk+k}-1$ and $A_{n=i}<0$ for $n< \sqrt{rk+k}-1$. Therefore, $n=\sqrt{rk+k}-1$ minimizes $\mathbb{E}[T_{A+B}]$.
\end{proof}

Observe that for a complete graph with $v$ vertices, the expression for message passing delay becomes $T_{A+B}=v (H_r+H_n-H_{r-n}-H_{n-k})$, as previously derived.

{\it Numerical Analysis:}
\begin{figure}[hbt]
    \centering
    \includegraphics[width=0.46\textwidth]{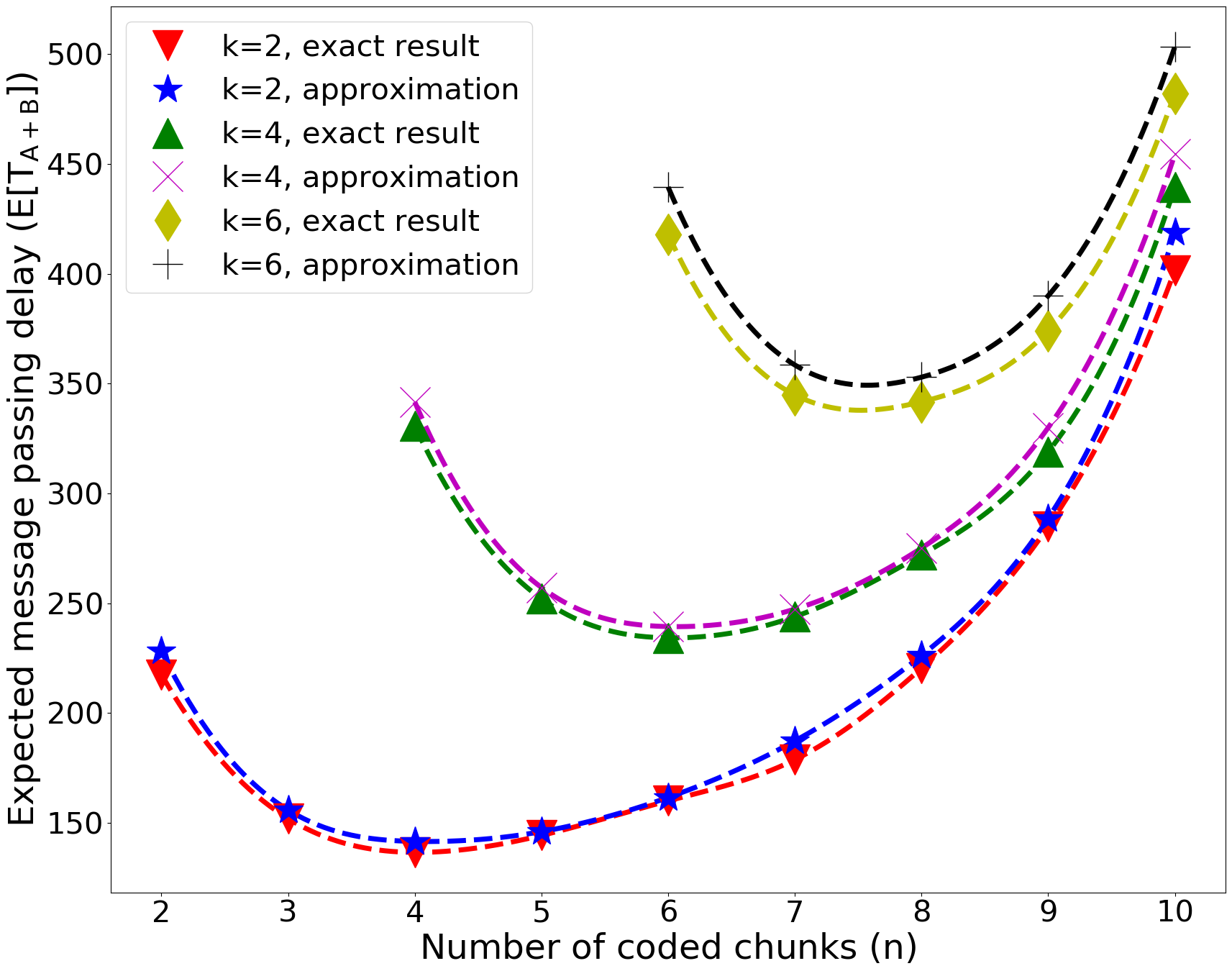}
    \caption{Expected message passing delay $\mathbb{E}[T_{A+B}]$ vs. the number of coded chunks $n$ (cf. \eqref{Eq:small}). This is a regular graph with $100$ vertices and the degree is $5$. The number of relays is $r=10$. Introducing proper data redundancy can reduce the message passing delay and the approximation is close to the exact result. }
    \label{fig:const-com1}
\end{figure}
In Fig.~\ref{fig:const-com1}, we evaluate the expected message passing delay $\mathbb{E}[T_{A+B}]$ vs. the number of coded chunks $n$ for both approximation and exact result. We consider a regular graph with $100$ vertices and the degree is $5$. The approximation is calculated from the expression of $\mathbb{E}[T_{A+B}]$ given in \eqref{Eq:small}, and the exact result is an average of $1000$ sampled delay values.
Some observations are made from the figure: when $n$ is given, $\mathbb{E}[T_{A+B}]$ increases with $k$. For example, when $n=7$, the case "$k=2$" gives the minimum $\mathbb{E}[T_{A+B}]$. When $n$ is large (e.g., $n=10$), the gaps between these three cases are small. These observations are consistent with the theoretical analysis for $k$.   
We also observe that when $k$ is given, $\mathbb{E}[T_{A+B}]$ always reaches the minimum at $n>k$. For example, when $k=4$, the optimal $n$ is $6$ and the code rate $k/n=2/3$. We conclude that introducing proper data redundancy can reduce the message passing delay. This observation is consistent with the result in Theorem~\ref{Lm:RW-opt-n}.
Besides, we observe that for each $k$, the approximation is very close to the exact result. It confirms that \eqref{Eq:small} is good enough to approximate the expected message passing delay.

\subsection{Random Transmission Time}
\label{subsec:Model1cg}
Under the random transmission time model, both the source and collector need to spend some walking time (totally $\eta_1+...+\eta_{S}$) to meet a relay, and then they need to spend the message chunk passing time (defined as $t_{t}$) to delivery and collect the chunk. According to Subsection~\ref{subsec:chucktrans}, $t_{t}$ follows a shifted exponential distribution $\sexpD(\Delta,\lambda)$, where $\Delta=\ell$.

\subsubsection{Dissemination Time}
%According to the system model in Sec.~\ref{Sec:sys}, a message can be split into $k$ data chunks, and the data chunks can be encoded into $n$ coded chunks. Thus, 
The source needs to disseminate $n$ coded chunks to $r$($\ge n$) relays. When the source randomly walks on a $d$-regular graph with $v$ vertices, the probability that it meets a relay is $\frac{r}{\theta_d v}$ according to \eqref{Eq:meet}. After the source deposits the first chunk in one of the $r$ relays, the second chunk can only be stored in one of the remaining $r-1$ relays. The probability of meeting an unoccupied relay decreases as the number of occupied relays grows. Therefore, in order to get the dissemination time $T_{A}$, we need to find the time $T_i$ for the source to disseminate the $i^{th}$ ($i=\{1,2,\dots,n\}$) coded chunk. 

\begin{lemma}
\label{Th:dis-time}
For the random transmission time model, the time $T_i$ for the source to disseminate the $i^{th}$ coded chunk to any one of $r-i+1$ relays is
\begin{equation}
    T_i= t_{t}+\sum_{j=1}^{S}\eta_{j} \quad \text{ with probability } (1-p_{r-i+1})^{S-1}p_{r-i+1}.
\end{equation}
Where $S$ is the number of steps the source spends to meet a relay, and $p_{r-i+1}\sim_v\frac{r-i+1}{\theta_{d} v}$ is the probability that the source meets any one of $r-i+1$ relays.
Then we have
\begin{equation}
   \mathbb{E}\left[T_i\right]= \frac{1}{\lambda}+\frac{m}{k}+\frac{\mathbb{E}\left[\eta\right]}{p_{r-i+1}}.
\end{equation}
\end{lemma}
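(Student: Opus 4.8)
The plan is to decompose $T_i$ into its two pieces — the random walking time needed to reach one of the still-unoccupied relays, and the chunk transmission time — and then to exploit the (asymptotically) memoryless structure of the meeting event on a large random regular graph to identify the law of $S$ and compute the mean.

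First I would pin down the distribution of $S$, the number of steps until the source lands on one of the $r-i+1$ unoccupied relays. By the asymptotic meeting result \eqref{Eq:meet}, at each step the walk independently hits one of these relays with probability $p_{r-i+1}\sim_v (r-i+1)/(\theta_d v)$ and misses otherwise; hence $S$ is geometric, $\mathbb{P}(S=s)=(1-p_{r-i+1})^{s-1}p_{r-i+1}$ for $s\ge 1$, which is exactly the probability attached to the event $\{T_i=t_t+\sum_{j=1}^S\eta_j\}$ in the statement. The structural fact being invoked is that, on large random regular graphs, the per-step meeting probability is asymptotically constant and the successive meeting events are, to leading order, independent, so the geometric model is justified by the cited random-walk analysis rather than being an exact property of a finite graph.

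Next I would compute $\mathbb{E}[T_i]=\mathbb{E}[t_t]+\mathbb{E}\!\left[\sum_{j=1}^S\eta_j\right]$ by linearity. For the transmission term, $t_t\sim\sexpD(\Delta,\lambda)$ with shift $\Delta=\ell=m/k$, so its mean is the shift plus the exponential mean, $\mathbb{E}[t_t]=m/k+1/\lambda$. For the walking term, the step durations $\eta_1,\eta_2,\dots$ are i.i.d.\ with mean $\mathbb{E}[\eta]$ and, crucially, independent of the trajectory-determined count $S$; conditioning on $S=s$ then gives $\mathbb{E}\!\left[\sum_{j=1}^S\eta_j\right]=\mathbb{E}[S]\,\mathbb{E}[\eta]$ (equivalently, Wald's identity). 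Since a geometric variable with parameter $p_{r-i+1}$ has mean $1/p_{r-i+1}$, this equals $\mathbb{E}[\eta]/p_{r-i+1}$. Adding the two contributions yields $\mathbb{E}[T_i]=1/\lambda+m/k+\mathbb{E}[\eta]/p_{r-i+1}$, as claimed.

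The computation is otherwise routine; the step I would be most careful about is the independence used to factor the random walking sum. One must argue that $S$, which depends only on which vertices the walk visits, is independent of the step durations $\eta_j$, so that the random sum decouples cleanly into $\mathbb{E}[S]\,\mathbb{E}[\eta]$. This independence is implicit in the separation of walk geometry from step timing in the model, but it is worth stating explicitly, alongside the caveat that the geometric law for $S$ is itself an asymptotic ($v\to\infty$) consequence of \eqref{Eq:meet}, which is why every equality above is understood in the $\sim_v$ sense.
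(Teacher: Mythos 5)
Your proposal is correct and follows essentially the same route as the paper: decompose $T_i$ into the transmission time and the random sum of step times, use the geometric law of $S$ induced by the per-step meeting probability $p_{r-i+1}$, and evaluate $\mathbb{E}\bigl[\sum_{j=1}^{S}\eta_j\bigr]=\mathbb{E}[S]\,\mathbb{E}[\eta]=\mathbb{E}[\eta]/p_{r-i+1}$. The only cosmetic difference is that the paper carries out the geometric sum explicitly term by term, whereas you package the same computation as Wald's identity; your added remarks on the independence of $S$ from the $\eta_j$ and on the asymptotic ($\sim_v$) nature of the geometric model are sound and, if anything, make the argument slightly more careful than the original.
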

\begin{proof}
Let $p=p_{r-i+1}$, we get the expectation of $T_i$ as follows:
\begin{small}
    \begin{align*}
        \mathbb{E}\left[T_i\right]&=\sum_{S=1}^{\infty}\mathbb{E}\left[t_{t}+\sum_{j=1}^{S}\eta_{j}\right](1-p)^{S-1}p\\
        &=p\mathbb{E}\left[t_{t}\right]\sum_{S=1}^{\infty}(1-p)^{S-1}\\&+p\sum_{S=1}^{\infty}\mathbb{E}\left[\sum_{j=1}^{S}\eta_{j}\right](1-p)^{S-1}.\qquad
    \end{align*} 
\end{small}
Since $\eta_1,...,\eta_{S}$ are i.i.d.,  $\mathbb{E}\left[\sum_{j=1}^{S}\eta_{j}\right]=S\mathbb{E}\left[\eta\right]$. Therefore,
$\mathbb{E}\left[T_i\right]=\mathbb{E}\left[t_{t}\right]+\frac{\mathbb{E}\left[\eta\right]}{p}=\frac{1}{\lambda}+\frac{m}{k}+\frac{\mathbb{E}\left[\eta\right]}{p_{r-i+1}}$.
\end{proof}
% $\qed$

Using Lemma~\ref{Th:dis-time}, we can get the expected  total dissemination time $ \mathbb{E}\left[T_{A}\right]$ in the following theorem.
\begin{theorem}
\label{Lm:dis-expect}
The expected dissemination time for the source to transmit all $n$ coded chunks to any $n$ out of $r$ relays is 
\begin{equation}
    \mathbb{E}\left[T_{A}\right]\sim_v\frac{n}{\lambda}+\frac{nm}{k}+\theta_d v\mathbb{E}\left[\eta\right](H_r-H_{r-n}).
\end{equation}
\end{theorem}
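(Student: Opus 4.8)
The plan is to exploit the fact that the total dissemination time is the sum of the individual chunk dissemination times analyzed in Lemma~\ref{Th:dis-time}, and then apply linearity of expectation together with a re-indexing of the resulting harmonic-type sum. Since Alice deposits her $n$ coded chunks one after another, the total time decomposes additively as $T_A = \sum_{i=1}^n T_i$, where $T_i$ is the time to place the $i$-th chunk (walking plus transmission). I would stress that this decomposition yields $\mathbb{E}[T_A] = \sum_{i=1}^n \mathbb{E}[T_i]$ by linearity of expectation \emph{regardless} of any statistical dependence among the successive phases $T_i$, so no independence assumption on the random walk's successive meeting times is needed for the expectation.

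Next I would substitute the per-chunk expectation $\mathbb{E}[T_i] = \frac{1}{\lambda} + \frac{m}{k} + \frac{\mathbb{E}[\eta]}{p_{r-i+1}}$ from Lemma~\ref{Th:dis-time}. The two summands $\frac{1}{\lambda}$ and $\frac{m}{k}$ are independent of $i$, so they contribute $\frac{n}{\lambda} + \frac{nm}{k}$ to the total. For the walking-time contribution I would insert the asymptotic meeting probability $p_{r-i+1} \sim_v \frac{r-i+1}{\theta_d v}$ from \eqref{Eq:meet}, giving $\frac{\mathbb{E}[\eta]}{p_{r-i+1}} \sim_v \frac{\theta_d v\, \mathbb{E}[\eta]}{r-i+1}$. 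Summing over $i = 1,\dots,n$ and re-indexing with $j = r-i+1$ converts $\sum_{i=1}^n \frac{1}{r-i+1}$ into $\sum_{j=r-n+1}^{r} \frac{1}{j} = H_r - H_{r-n}$, which produces exactly the claimed third term $\theta_d v\, \mathbb{E}[\eta]\,(H_r - H_{r-n})$.

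The step requiring the most care is the composition of the asymptotic equivalences rather than any hard analysis. Because the sum runs over a \emph{fixed finite} number $n$ of terms, each equivalence $\frac{\mathbb{E}[\eta]}{p_{r-i+1}} \sim_v \frac{\theta_d v\,\mathbb{E}[\eta]}{r-i+1}$ (valid as $v \to \infty$) is preserved under the finite summation, so the order-$v$ walking-time part legitimately inherits the $\sim_v$ relation. The only genuine bookkeeping is the harmonic re-indexing above; the $\mathcal{O}(1)$ constants $\frac{n}{\lambda} + \frac{nm}{k}$ simply ride along additively. Everything else is direct substitution, so I would expect the proof to be short once the additive decomposition of $T_A$ and the index shift are stated explicitly.
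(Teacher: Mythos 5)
Your proposal is correct and follows essentially the same route as the paper: decompose $T_A=\sum_{i=1}^n T_i$, apply linearity of expectation, substitute the per-chunk expectation from Lemma~\ref{Th:dis-time} with $p_{r-i+1}\sim_v \frac{r-i+1}{\theta_d v}$, and re-index the resulting sum into $H_r-H_{r-n}$. Your explicit remarks that no independence is needed for the expectation and that the $\sim_v$ relation survives a finite sum are points the paper leaves implicit, but they do not change the argument.
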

\begin{proof}
Since $\mathbb{E}\left[T_i\right]= \frac{1}{\lambda}+\frac{m}{k}+\frac{\mathbb{E}\left[\eta\right]}{p_{r-i+1}}$ for $i \in\{1,2,\dots,n\}$, where $p_{r-i+1}\sim_v\frac{r-i+1}{\theta_{d} v}$, we have
\begin{small}
    \begin{align*}
        \mathbb{E}\left[T_{A}\right]&=\mathbb{E}[\sum_{i=1}^{n}T_i]
        =n\mathbb{E}\left[t_{t}\right]+\sum_{i=1}^{n}\frac{\mathbb{E}\left[\eta\right]}{p_{r-i+1}}\\
        &\sim_v n\mathbb{E}\left[t_{t}\right]+\sum_{i=1}^{n}\frac{\theta_dv\mathbb{E}\left[\eta\right]}{r-i+1}\\
        &=\frac{n}{\lambda}+n\ell+\theta_d v\mathbb{E}\left[\eta\right](H_r-H_{r-n}).\hfill 
    \end{align*}
\end{small}
Since $\ell=\frac{m}{k}$, $\mathbb{E}\left[T_{A}\right]\sim_v\frac{n}{\lambda}+\frac{nm}{k}+\theta_d v\mathbb{E}\left[\eta\right](H_r-H_{r-n})$.
\end{proof}

Notice that  $\eta$ is determined by the distance between two vertices and the speed of the source/collector. To study these two parameters is not the purpose of this paper. Therefore, we will directly assign a value to $\mathbb{E}[\eta]$. Therefore, from Theorem~\ref{Lm:dis-expect}, we see that $\mathbb{E}\left[T_{A}\right]$ increases with $n$ and decreases with increasing $k$. Since $r=\varphi n$ and $k=\xi n$ for some $\varphi,\xi\in(0,1)$, it is not hard to see $\mathcal{O}(\mathbb{E}[T_{A}])=\mathcal{O}(v+n)$.

For a scenario without redundant relays, the number of relays $r$ is equal to the number of coded chunks $n$. Thus, the expected dissemination time is $\mathbb{E}[T_{A}]\sim_{v}\frac{n}{\lambda}+\frac{nm}{k}+\theta_d v H_n$. Namely, we have  $\mathcal{O}(\mathbb{E}[T_{A}])=\mathcal{O}(v\log n+n)$.

\subsubsection{Collection Time}

The collector needs to collect at least $k$ coded chunks to recover the message. Since there are only $n$($\le r$) relays on the graph store the coded chunks, the probability that the collector meets a relay with a coded chunk is $\frac{n}{\theta_d v}$. After the collector retrieves the first chunk in one of the $n$ relays, the second chunk can only be retrieved in one of the remaining $n-1$ relays. 
Therefore, similar to the dissemination time, we find the expected collection time $\mathbb{E}\left[T_{B}\right]$ in the following corollary.

\begin{corollary}
\label{Lm:col-expect}
The expected collection time for the collector to retrieve any $k$ chunks from $n$ relays is 
\[
    \mathbb{E}\left[T_{B}\right]\sim_v \frac{k}{\lambda}+m+\theta_d v\mathbb{E}\left[\eta\right](H_n-H_{n-k}).
\]
\end{corollary}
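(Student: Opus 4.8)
The plan is to mirror the proof of Theorem~\ref{Lm:dis-expect} almost verbatim, since the collection phase is the exact dual of the dissemination phase: Bob searching for chunk-bearing relays plays the role of Alice searching for unoccupied relays. First I would establish the per-chunk analog of Lemma~\ref{Th:dis-time}. In the collection phase Bob must retrieve $k$ chunks from the $n$ relays that store them; when Bob has already retrieved $i-1$ of them, exactly $n-i+1$ chunk-bearing relays remain, so by \eqref{Eq:meet} the per-step probability that he reaches one of them is $p_{n-i+1}\sim_v\frac{n-i+1}{\theta_d v}$. The number of steps $S$ until Bob meets such a relay is therefore geometric with parameter $p_{n-i+1}$, and the transmission time $t_t$ of a coded chunk is still $\sexpD(\ell,\lambda)$ with $\ell=m/k$, because coded chunks have the same length as data chunks. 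Hence Lemma~\ref{Th:dis-time} applies with $r-i+1$ replaced by $n-i+1$, giving the expected time to collect the $i$-th chunk as $\mathbb{E}[T_i]=\frac{1}{\lambda}+\frac{m}{k}+\frac{\mathbb{E}[\eta]}{p_{n-i+1}}$ for $i\in\{1,\dots,k\}$.

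Next I would sum these per-chunk expectations. By linearity, $\mathbb{E}[T_B]=\sum_{i=1}^{k}\mathbb{E}[T_i]$ splits into three pieces: the $\frac{1}{\lambda}$ term contributes $\frac{k}{\lambda}$; the $\frac{m}{k}$ term, summed over exactly $k$ chunks, contributes $k\cdot\frac{m}{k}=m$ (this is the single place where the collection expression differs structurally from the dissemination one, where the same term produced $\frac{nm}{k}$); and the walking term, after substituting $p_{n-i+1}\sim_v\frac{n-i+1}{\theta_d v}$, contributes $\theta_d v\,\mathbb{E}[\eta]\sum_{i=1}^{k}\frac{1}{n-i+1}$. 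The final step is to recognize that the sum collapses to the harmonic difference $\sum_{i=1}^{k}\frac{1}{n-i+1}=\frac{1}{n}+\cdots+\frac{1}{n-k+1}=H_n-H_{n-k}$, which assembles the three pieces into exactly the claimed expression.

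There is no genuine obstacle here; this really is a corollary, and the only points requiring care are bookkeeping rather than new ideas. The one thing I would double-check is the justification for invoking \eqref{Eq:meet} afresh at each stage: strictly, the meeting probability $\frac{n-i+1}{\theta_d v}$ is an asymptotic-in-$v$ statement for a walk started in the stationary regime, so I would note, as is already implicit in the proof of Theorem~\ref{Lm:dis-expect}, that we work in the large-$v$ regime of \cite{cooper2009multiple} and treat the successive hitting phases as independent geometric searches, gathering all the $\sim_v$ approximations into the single asymptotic equivalence stated in the corollary.
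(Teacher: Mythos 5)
Your proposal is correct and is exactly the argument the paper intends: the paper proves this corollary by simply remarking that it follows from the proofs of Lemma~\ref{Th:dis-time} and Theorem~\ref{Lm:dis-expect} with $r-i+1$ replaced by $n-i+1$ and the sum taken over $k$ chunks, which is what you carry out. Your bookkeeping (the $\frac{k}{\lambda}$, the $k\cdot\frac{m}{k}=m$, and the harmonic difference $H_n-H_{n-k}$) matches the stated result.
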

The above result holds by applying the similar proofs of Lemma~\ref{Th:dis-time} and Theorem~\ref{Lm:dis-expect}.

From Corollary~\ref{Lm:col-expect}, we see that $\mathbb{E}\left[T_{B}\right]$ decreases with increasing $n$ and increases with $k$. Since $k=\xi n$ for some $\varphi,\xi\in(0,1)$, it is not hard to see $\mathcal{O}(\mathbb{E}[T_{B}])=\mathcal{O}(v+k)$.

For a scenario without coding, the code rate $k/n=1$, namely, $k=n$. Thus, the expected collection time is $\mathbb{E}[T_{B}]\sim_{v}\frac{k}{\lambda}+m+\theta_d v H_n$. Namely, we have  $\mathcal{O}(\mathbb{E}[T_{B}])=\mathcal{O}(v\log n+n)$.

\subsubsection{Message Passing Delay}
\label{Sec:UA-joint}
The message passing delay is the sum of the dissemination time and collection time, and its expectation is in Corollary~\ref{Lm:joi-expect}.
\begin{corollary}
\label{Lm:joi-expect}
For the random transmission time model, the expected message passing delay is:
\begin{equation}
\begin{split}
\label{Eq:model1}
    \mathbb{E}\left[T_{A+B}\right]\sim_v &\frac{n+k}{\lambda}+\left(\frac{n}{k}+1\right)m\\&+\theta_d v\mathbb{E}\left[\eta\right](H_r+H_n-H_{r-n}-H_{n-k}).
\end{split}
\end{equation}
\end{corollary}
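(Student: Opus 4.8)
The plan is to treat this as a direct consequence of linearity of expectation applied to the decomposition $T_{A+B} = T_A + T_B$, followed by adding the two asymptotic expressions already in hand. First I would record that, by the definition of the message passing delay as dissemination followed by collection, we have $\mathbb{E}[T_{A+B}] = \mathbb{E}[T_A] + \mathbb{E}[T_B]$, with no independence assumption needed since linearity of expectation holds unconditionally.

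Next I would substitute the two known expressions. From Theorem~\ref{Lm:dis-expect}, $\mathbb{E}[T_A] \sim_v \frac{n}{\lambda} + \frac{nm}{k} + \theta_d v \mathbb{E}[\eta](H_r - H_{r-n})$, and from Corollary~\ref{Lm:col-expect}, $\mathbb{E}[T_B] \sim_v \frac{k}{\lambda} + m + \theta_d v \mathbb{E}[\eta](H_n - H_{n-k})$. Adding term by term groups the $1/\lambda$ contributions into $\frac{n+k}{\lambda}$, the chunk-length contributions into $\left(\frac{n}{k}+1\right)m$ via $\frac{nm}{k} + m$, and the walking-time contributions into $\theta_d v \mathbb{E}[\eta](H_r + H_n - H_{r-n} - H_{n-k})$. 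This reproduces exactly the claimed expression \eqref{Eq:model1}, so the algebra itself is routine.

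The one point requiring care — and what I would flag as the main (if minor) obstacle — is justifying that the asymptotic equivalence relation $\sim_v$ survives the addition. Adding two relations $a_v \sim_v f_v$ and $b_v \sim_v g_v$ can fail to give $a_v + b_v \sim_v f_v + g_v$ when cancellation between the two summands occurs. Here, however, both $\mathbb{E}[T_A]$ and $\mathbb{E}[T_B]$ are strictly positive and every summand in the two expressions is nonnegative, so no cancellation is possible and the equivalences combine directly. This is precisely the positivity argument already invoked to derive \eqref{Eq:small} in the constant transmission time case, so I would simply appeal to that same reasoning. The corollary then follows immediately.
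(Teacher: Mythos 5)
Your proposal is correct and matches the paper's (implicit) argument exactly: the corollary is obtained by adding the expressions from Theorem~\ref{Lm:dis-expect} and Corollary~\ref{Lm:col-expect} via linearity of expectation. Your extra remark that positivity of both summands is what lets the two $\sim_v$ relations be added is the same justification the paper invokes when deriving \eqref{Eq:small}, so nothing is missing.
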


From the conclusions of the dissemination time and the collection time, we see that there should be an optimal $n$ and an optimal $k$ which minimize the $\mathbb{E}\left[T_{A+B}\right]$. Assume $k$ is given, we find the optimal $n$ in the following theorem.
\begin{theorem}
\label{cor:ran-know}
For the random transmission time model, when $k$ is given, the optimal $n$ appear in the range $[k,\sqrt{rk+k}-1]$.
\end{theorem}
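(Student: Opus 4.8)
The plan is to reuse the discrete first-difference argument from the proof of Theorem~\ref{Lm:RW-opt-n}, now applied to the fuller delay expression \eqref{Eq:model1}. With $k$ and $r$ fixed, $\mathbb{E}[T_{A+B}]$ becomes a function of the integer $n$ ranging over $\{k,\dots,r\}$, and I would locate its minimizer by finding where the forward difference $B_{n=i}:=\mathbb{E}[T_{A+B}(n=i+1)]-\mathbb{E}[T_{A+B}(n=i)]$ switches from negative to nonnegative. The optimal $n$ is the smallest such $i$.

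The key step is to read off $B_{n=i}$ from \eqref{Eq:model1}. Differencing its three summands in $n$, and using $H_{i+1}-H_i=\tfrac{1}{i+1}$, $-(H_{r-i-1}-H_{r-i})=\tfrac{1}{r-i}$, and $-(H_{i+1-k}-H_{i-k})=-\tfrac{1}{i+1-k}$, gives
\begin{equation*}
B_{n=i}\sim_v \frac{1}{\lambda}+\frac{m}{k}+\theta_d v\,\mathbb{E}[\eta]\left(\frac{1}{i+1}+\frac{1}{r-i}-\frac{1}{i+1-k}\right).
\end{equation*}
The bracketed factor is precisely the quantity analyzed in Theorem~\ref{Lm:RW-opt-n}: over the common denominator $(i+1)(r-i)(i+1-k)$ its numerator collapses to $(i+1)^2-(rk+k)$. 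Thus $B_{n=i}$ is the constant-transmission difference scaled by $\mathbb{E}[\eta]$, plus the strictly positive constant $\tfrac{1}{\lambda}+\tfrac{m}{k}$ contributed by the chunk transmission time.

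With this factorization the two endpoints follow quickly. For every integer $i$ with $k\le i<r$ the denominator $(i+1)(r-i)(i+1-k)$ is positive; hence as soon as $i\ge\sqrt{rk+k}-1$ the numerator $(i+1)^2-(rk+k)$ is nonnegative and, because of the extra $\tfrac{1}{\lambda}+\tfrac{m}{k}>0$, we get $B_{n=i}>0$ strictly. Therefore $\mathbb{E}[T_{A+B}]$ is strictly increasing once $n$ reaches $\lceil\sqrt{rk+k}-1\rceil$, so no minimizer can exceed $\sqrt{rk+k}-1$ (up to the same integer rounding as in Theorem~\ref{Lm:RW-opt-n}). The lower endpoint is a feasibility constraint rather than an optimization fact: an $[n,k]$ MDS code and Bob's need to gather $k$ distinct coded chunks force $n\ge k$, and $H_{n-k}$ is defined only there. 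I would also note $\sqrt{rk+k}-1<r$, since $k\le r$, so the asserted interval sits inside the feasible set $\{k,\dots,r\}$.

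I expect the subtle point to be conceptual rather than computational: unlike Theorem~\ref{Lm:RW-opt-n}, whose sign change pinpoints the exact minimizer, here the positive additive term $\tfrac{1}{\lambda}+\tfrac{m}{k}$ lowers the threshold at which $B_{n=i}$ turns positive, so the minimizer can only move \emph{left} of $\sqrt{rk+k}-1$ by a parameter-dependent amount. The claim is therefore a containment, not a closed form, and I would emphasize the two limiting regimes to justify both endpoints: as the transmission cost $\tfrac{1}{\lambda}+\tfrac{m}{k}$ vanishes the optimum tends to the constant-model value $\sqrt{rk+k}-1$, whereas as it grows the optimum is pushed down to the feasibility floor $n=k$. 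No single estimate is hard; the care lies in presenting $B_{n=i}$ in the factored form that makes both the sign analysis and the regime interpretation transparent.
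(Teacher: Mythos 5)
Your proposal is correct and follows essentially the same route as the paper's proof: both form the forward difference $\mathbb{E}[T_{A+B}(n=i+1)]-\mathbb{E}[T_{A+B}(n=i)]$, identify it as the constant positive term $\tfrac{1}{\lambda}+\tfrac{m}{k}$ plus the harmonic-difference factor whose sign was already settled in Theorem~\ref{Lm:RW-opt-n}, and conclude the minimizer cannot exceed $\sqrt{rk+k}-1$ while $n\ge k$ is forced by feasibility. Your added remarks on the two limiting regimes mirror the discussion the paper gives immediately after the theorem.
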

\begin{proof}
Define $A_{n=i}=\mathbb{E}[T_{A+B}(n=i+1)]-\mathbb{E}[T_{A+B}(n=i)]$, we have $A_{n=i}\sim_{v} \frac{1}{\lambda}+\frac{m}{k}+\theta_d v\mathbb{E}\left[\eta\right] (\frac{1}{i+1}+\frac{1}{r-n}-\frac{1}{n+1-k})$. Since $\lambda,m,k>0$, the former terms $\frac{1}{\lambda}+\frac{m}{k}>0$. According to the proof of Theorem~\ref{Lm:RW-opt-n}, we know that the latter term is larger than $0$ when $n>\sqrt{rk+k}-1$, and smaller than $0$ when $n<\sqrt{rk+k}-1$. Therefore, the optimal $n$, as an integer, must appear in the range $[k,\sqrt{rk+k}-1]$.
\end{proof}
From Theorem~\ref{cor:ran-know}, we know that when $\frac{1}{\lambda}+m\gg\theta_d v\mathbb{E}\left[\eta\right]$, it is close to $k$. When $\theta_d v\mathbb{E}\left[\eta\right]\gg\frac{1}{\lambda}+\frac{m}{k}$, it is close to $\sqrt{rk+k}-1$.
 
To analyze the optimal $k$, we assume $n$ is given. From \eqref{Eq:model1}, we see that $k$ appears in the terms $\frac{mn}{k}+\frac{k}{\lambda}-\theta_d v\mathbb{E}\left[\eta\right]H_{n-k}$, where $\frac{mn}{k}$ decreases with increasing $k$, and $\frac{k}{\lambda}-\theta_d v\mathbb{E}\left[\eta\right]H_{n-k}$ increases with $k$. Therefore,  the optimal $k$ changes with the values of system parameters, e.g., $\lambda$, $d$, $v$, $\eta$ and $m$.

Notice that for a complete graph with $v$ vertices, we find the expression of message passing delay is $T_{A+B}=\frac{n+k}{\lambda}+\left(\frac{n}{k}+1\right)m+ v\mathbb{E}\left[\eta\right](H_r+H_n-H_{r-n}-H_{n-k})$. The results in Theorem~\ref{cor:ran-know} still hold.

{\it Numerical Analysis:}
\begin{figure}[hbt]
    \centering
    \includegraphics[width=0.46\textwidth]{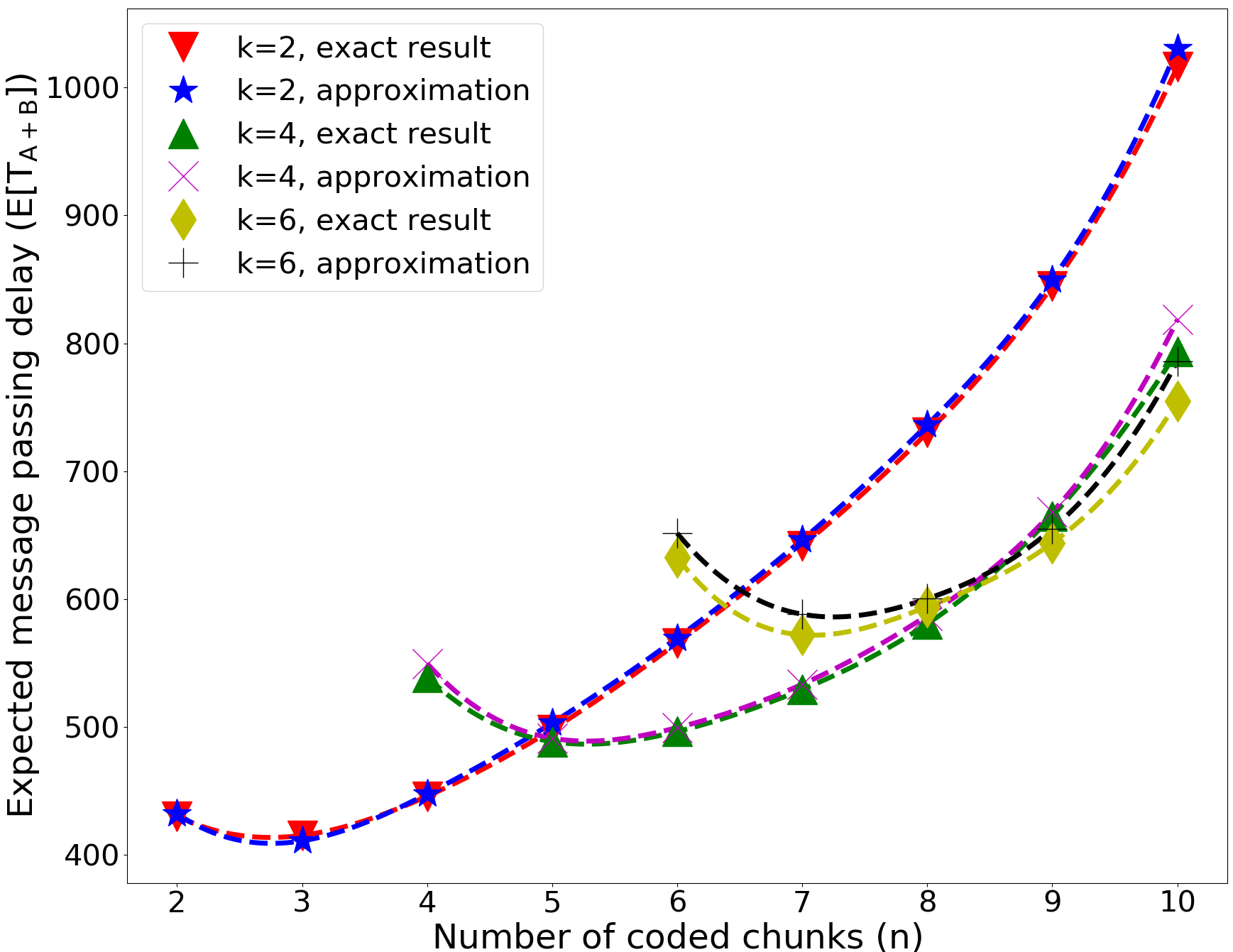}
    \caption{Expected message passing delay $\mathbb{E}[T_{A+B}]$ vs. the number of coded chunks $n$ (cf. \eqref{Eq:model1}). This is a regular graph with $100$ vertices and the degree is $5$. The number of relays is $r=10$ and the message chunk passing time follows $\sexpD(100/k,1)$. Introducing proper data redundancy can reduce the message passing delay and the approximation is close to the exact result.}
    \label{fig:uni-com1}
\end{figure}
In Fig.~\ref{fig:uni-com1}, we evaluate the expected message passing delay $\mathbb{E}[T_{A+B}]$ vs. the number of coded chunks $n$ for both approximation and exact result. We consider a regular graph with $100$ vertices and the degree is $5$. The message chunk passing time $t_{t}$ follows $\sexpD(\ell,\lambda)$, where $\ell=100/k$ is the length of a chunk, $\lambda=1$ is the rate parameter. 
By sampling the shifted exponential distribution, we can get a message passing delay value. The exact result is an average of $1000$ sampled delay values.
The approximation is calculated from the expression of $\mathbb{E}[T_{A+B}]$ given in \eqref{Eq:model1}.
Some observations are made from the figure: when $k$ is given, $\mathbb{E}[T_{A+B}]$ still reaches the minimum at $n>k$. Which indicates that introducing proper data redundancy can reduce the message passing delay. However, comparing to Fig.~\ref{fig:const-com1}, the optimal code rate $k/n$ in Fig.~\ref{fig:uni-com1} is more close to $1$. For example, considering the case "$k=4$", the optimal code rates in Fig.~\ref{fig:const-com1} and Fig.~\ref{fig:uni-com1} are respectively $2/3$ and $4/5$. This observation validates the result in Theorem~\ref{cor:ran-know} that when $\frac{1}{\lambda}+\frac{m}{k}=1+100/k$ is close to $\theta_d v=400/3$, the optimal $n$ appears in the range $[1, \sqrt{11k}-1]$. 

We also observe that when $n$ is given, $\mathbb{E}[T_{A+B}]$ no longer increases with $k$. For example, when $n=6$, the case "$k=4$" has a smaller delay than the case "$k=2$". Although our theoretical analysis also shows this result, we can not describe how the optimal $k$ changes due to there are too many parameters can affect the result. Besides, we observe that the approximation is very close to the exact result. It confirms that \eqref{Eq:model1} is good enough to approximate the expected message passing delay.

\subsection{Comparisons and Conclusions}
In Table~\ref{Table:complexity}, we compare different scenarios with or without redundant relays (RR) and coding. We conclude that redundant relays can help to decrease the dissemination time and coding can decrease the collection time. For example, considering the constant transmission time model, with the help of redundant relays and coding, the average dissemination and collection time decrease from $\mathcal{O}(v\log n)$ to $\mathcal{O}(v)$, respectively. 
\begin{table*}[hbt]
\begin{center}
\begin{threeparttable}
    \caption{Dissemination and collection times with or without redundant relays and coding}
    \label{Table:complexity}
		\begin{small}
			\begin{tabular}{@{}lll|ll@{}}
				\toprule 
				%& &   \multicolumn{3}{c@{}}{\sc {Delay Models}}
				%\\  \cmidrule{3-5}
				&  \multicolumn{2}{c@{}}{\sc Constant TT\tnote{1}} &  \multicolumn{2}{c@{}}{\sc Random TT}
				\\\cmidrule{2-5}
				& {\bf Dissemination} & {\bf Collection} & {\bf Dissemination} & {\bf Collection} \\ [2mm]
			    {\bf w/ RR\tnote{1} and coding}  & $\mathcal{O}(v)$ & $\mathcal{O}(v)$ & $\mathcal{O}(v+ n)$  &  $\mathcal{O}(v+ k)$ \\[2mm] 
				{\bf w/o RR and w/ coding}  & $\mathcal{O}(v\log n)$   &  $\mathcal{O}(v)$ & $\mathcal{O}(v\log n+n)$ & $\mathcal{O}(v+k)$\\[2mm]
				{\bf w/ RR and w/o coding} & $\mathcal{O}(v)$ & $\mathcal{O}(v\log n)$ & $\mathcal{O}(v+ n)$ & $\mathcal{O}(v\log n+n)$ \\[2mm]
				{\bf w/o RR and coding}  & $\mathcal{O}(v\log n)$ & $\mathcal{O}(v\log n)$ & $\mathcal{O}(v\log n+n)$ & $\mathcal{O}(v\log n+n)$\\
				\bottomrule
			\end{tabular}
		\end{small}
\begin{tablenotes}
\footnotesize
\item[1] TT is transmission time (cf.~\ref{subsec:chucktrans}).
\item[2] RR is redundant relays.
\end{tablenotes}
    \end{threeparttable}
	\end{center}
\end{table*}

In Table~\ref{Table:delay}, we summarize the results of the expected message passing delay in both constant transmission time and random transmission time models. In the constant transmission time model, we deduce the optimal $k$ and $n$ from the expression of $\mathbb{E}[T_{A+B}]$. However, in the random transmission time model, the exact values of optimal $k$ and $n$ are hard to deduce from the expression of $\mathbb{E}[T_{A+B}]$.
\begin{table*}[hbt]
\begin{center}
\begin{threeparttable}
    \caption{Conclusions of important results in both delay models}
    \label{Table:delay}
		\begin{small}
			\begin{tabular}{@{}lllll@{}}
				\toprule 
				%& &   \multicolumn{3}{c@{}}{\sc {Delay Models}}
				%\\  \cmidrule{3-5}
				& &  $\bf{\mathbb{E}[T_{A+B}]}$ 
				& {\bf Optimal $k$ (given $n$)}
				& {\bf Optimal $n$ (given $k$)}
				\\ [2mm]
			    \multirow{4}{*}{\rotatebox{90}{
						{\sc{Model}}}}  & 
				{\bf Constant TT\tnote{1}} & $\theta_d v(H_r+H_n-H_{r-n}-H_{n-k})$ & $k=1$ 
				 &  $\left \lceil{\sqrt{rk+k}-1} \right \rceil$ or $\left \lfloor{\sqrt{rk+k}-1} \right \rfloor$\\[2mm] 
				& {\bf Random TT}   &  $\begin{aligned}&\frac{n+k}{\lambda}+\left(\frac{n}{k}+1\right)m+\\& \theta_d v\mathbb{E}\left[\eta\right](H_r+H_n-H_{r-n}-H_{n-k})\end{aligned}$ & not clear & $k\le n\le \sqrt{rk+k}-1$\\[2mm]
				\bottomrule
			\end{tabular}
		\end{small}
	\begin{tablenotes}
\footnotesize
\item[1] TT is transmission time (cf.~\ref{subsec:chucktrans}).
\end{tablenotes}
    \end{threeparttable}
	\end{center}
\end{table*}

\subsection{Source and Collector's Starting Time}
\label{subsec:starting}
In the previous subsections, we only consider the message passing delay which defined as the sum of the dissemination time and collection time. In practice, we should also look into the source and collector's dynamics on when to start each process. Thus, the overall message passing time, defined as the time from the beginning of the source's dissemination to the end of the collector's collection, is also an important performance metric for the message passing system. 

To analyze the overall message passing time, in general, two scenarios should be considered: 1) the collector starts collecting chunks after the dissemination. Thus, the overall message passing time is larger than the message passing delay, since there will be some intermediate time between dissemination and collection. Nevertheless, the previous conclusions also hold for the overall message passing time analysis, because the intermediate time is unrelated to the dissemination and collection times. 2) The collector starts collecting chunks during the dissemination. Thus, the overall message passing time is smaller than the message passing delay and the previous conclusions do not always hold for the overall message passing time analysis. This is an interest problem that needs to be studied in the future.

%Here, we briefly analyze a special example in the second scenario. The source and collector always walk randomly in an area and ready to disseminate/collect. Once the source receives the message, it will start disseminating the coded chunks. Meanwhile, since the collector is always ready to collect, it will start collecting. Clearly, the message passing delay is equal to the collection time. However, it is difficult to find the expression. In spite of that, we can still get some conclusions. For the constant transmission time model, when $n$ is given, $k=1$ is optimal. This is because the chunk transmission time can be ignored. When $k$ is given, we have the optimal $n=r$, since the message passing delay does not relate to the dissemination time. Therefore, we conclude that introducing more data redundancy can reduce the delay. 
%For random transmission time model, we can only get the optimal $n=r$ given $k$. That is, we should make full use of the relays. 

\subsection{Multiple Sources and Collectors with Network Coding for Further Reduction of Delay}
\label{sub:multi}
With the help of redundant relays and coding, we reduced the dissemination and collection time for both the constant and random transmission time models. However, the effectiveness of redundant relays and coding is limited. For example, we can not reduce the average dissemination time beyond $\mathcal{O}(v)$ under the constant transmission time model. Nevertheless, we can further reduce the delay by introducing multiple sources and collectors. We outline the main ideas below and leave the details of this problem for a follow-up study.

Based on the results of \cite{alon2008many}, we know that $w$ random walks on a complete graph or a $d$-regular graph achieve $\mathcal{O}(w)$ reduction of the single walk graph cover time.  (This is true for some other graphs \cite{elsasser2011tight} as well.) Thus multiple sources will reach the relays faster than a single source but
need to disseminate different message chunks to reduce the message passing delay. Thus, each source must know which chunks have been disseminated by the other sources. Such assumption is impractical and inadequate for covert communications scenarios. Without this assumption, having multiple sources and collectors may even worsen the delay. To see that, consider two sources that disseminate a message of $2$ data chunks to $4$ relays by randomly selecting a chunk each time they encounter a relay. If both sources disseminate only one data chunk, the dissemination time is significantly shorter than when one source disseminates $2$ data chunks. However, the two sources will disseminate the same chunk with a probability $0.5$. If both sources disseminate $2$ data chunks, the dissemination time is larger than the one source scenario.

The sources can apply various strategies to ensure that they disseminate different chunks. For example, if there are two sources, one can disseminate the even chunks and the other odd chunks. 
However, we will have higher benefits if we do not fix the number of chunks that each source has to disseminate as long as the sum of the disseminated chunks is $n$.
Moreover, the splitting strategy does not reduce the collection delay.
We may use network coding to improve the multiple sources/collectors and reduce the collection delay. In this way, instead of sending the data chunks to the relays, the sources will send a random linear combination of the data chunks. 

For the most general scenario, the sources may not even be able to agree or know which chunks are disseminated before/during the dissemination.  Here, using network coding is essential. Each time a source encounters a relay, it randomly generates a coded chunk as a linear combination of $k$ data chunks. Thus, the sources only need to disseminate $n$ coded chunks together. 
In the collection phase, the collectors need to collect totally $k^{'}$ out of $n$ chunks, where $k^{'}$ is just slightly larger than or equal to $k$. 
Then, when the field size is sufficiently large, the $k$ data chunks can be decoded from $k{'}$ coded chunks with high probability.
Using fountain codes, e.g., LT \cite{luby2002lt} or Raptor codes \cite{shokrollahi2006raptor}, instead of random network codes can simplify decoding. 

\section{Covert Message Passing}
\label{Sec:covertpass}
We now consider message passing where in addition to Alice, Bob, and relays,  there is another communication participant, Willie the warden. We are interested in the probability that the message passing from Alice and Bob is hidden from Willie who has certain mobility and detection capabilities.

The transmission stays covert only when Alice successfully passes a message to Bob through some relays without being detected by Willie. Although covert communication has been studied in many papers, covert message passing in mobile networks is recently proposed. Therefore, there is no model on how the warden detects. In the following, we provide two possible models. 

\subsection{Two Warden Detection Models}
\label{Sec:warden}
\subsubsection*{Random Patrolling Model} In this  model, see Fig.~\ref{fig:wardenmodel} (left), the warden Willie walks randomly on the same graph with Alice and Bob.  We assume that all mobile participants are moving synchronously. To establish that Alice and Bob are communicating, Willie has to 1) meet one of them at a relay and 2) detect the transmission to the relay is taking place before it was over. The more time Willie has to observe a chunk transmission, the higher his probability of detection will be. The detection probability is thus an increasing function of the chunk length $\ell$. However, decreasing the chunk length will result in having to pass more  chunks, which in turn gives Willie more ($m/\ell$) opportunities to observe a transmission.

\begin{figure}[hbt]
    \centering
    \includegraphics[width=0.48\textwidth]{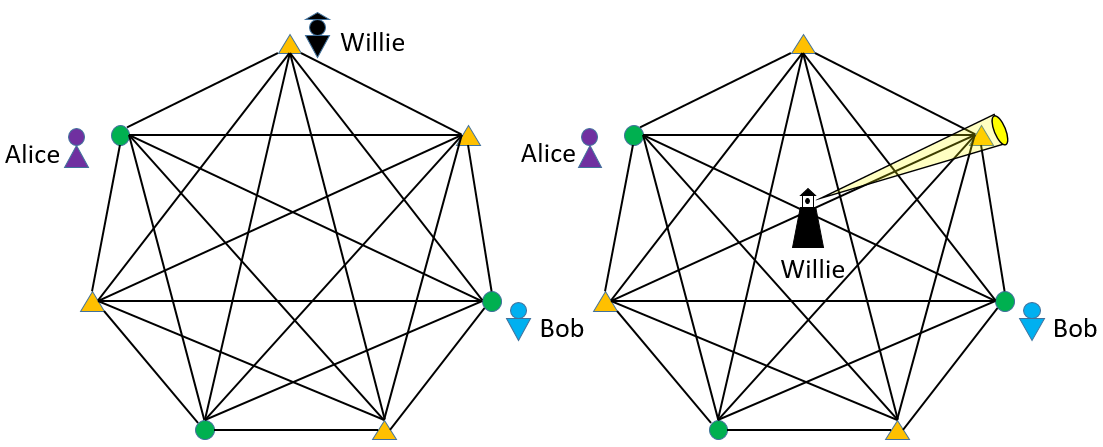}
    \caption{Two warden models are proposed: random walk (left) and uniform surveillance (right). For the random walk model, Willie walks randomly on the graph with Alice and Bob. He can only detect the communication when he meets Alice or Bob at the relay. For the uniform surveillance model, Willie monitors each vertex uniformly. He can only detect the communication when he monitors the relay during Alice/Bob's transmission.}
    \label{fig:wardenmodel}
\end{figure}

\subsubsection*{Uniform Surveillance Model}
In this model, we base covertness on the assumption that the warden Willie can monitor part of the vertices for some given time. A straightforward and informal way to visualize this model is to imagine that the warden is stationed somewhere ``in the middle'' of the graph\footnote{This is a very informal statement. There is no need for an exact ``middle'', we just need the warden to be at a place where he can observe different parts of the area at different times.}, on top of a lighthouse, see Fig.~\ref{fig:wardenmodel} (right). This way, he can only check the part where the lighthouse sheds its light and can not see what is happening behind him. We can also imagine the warden applies a Round-Robin protocol to check each vertex of the graph. Thus, he can monitor vertices uniformly.

To formalize this model, we assume that Alice and Bob transmit data to relays without implementing any covertness scheme. Therefore, if the warden happens to check a graph vertex while the data chunk transmission is taking place, he will detect it with probability $1$. We further assume that when Alice or Bob starts transmitting on a relay, the warden's time of arrival at this relay follows a uniform distribution. Thus, if the warden monitors the relay during the transmission, he will detect it; otherwise, he will not. This model was first proposed in paper \cite{peng2019straggling} which provided the analysis for a complete graph.

\subsection{Performance Metrics}
\subsubsection{Covertness Probability}
\label{subsub:covert}
Covertness probability is defined as the probability that Alice transmits a message to Bob without being detected by Willie. For example, assume that the message has $2$ data chunks, Alice needs to transmit $2$ times to $2$ relays and Bob also needs $2$ times to collect the chunks. If during each time, Willie will detect the transmission with a probability $P_d$, the covertness probability is $P_c=(1-P_d)^4$. Notice that when Willie detects the transmission, it does not mean he will get the content of the message. The message may be camouflaged as noise to avoid detection. However, this is another covert communication problem and will not be studied in this paper.

%Due to the limited storage of the IoT devices, we may have to split the message into smaller chunks to store at each device. %However, splitting the message may not hurt the covert message passing. 

\subsubsection{Covertness vs.\ Delay Tradeoff}

To optimize the covertness probability, we will also consider using splitting and data redundancy. From the definition of covertness probability, we know that although splitting increases the number of chances that Willie detects, it also decreases the detection probability $P_d$ for each chance (data chunk is smaller than the message). Therefore, it is hard to tell whether splitting decreases the covertness probability. 
Meanwhile, it is obvious that introducing data redundancy decreases the covertness probability. Therefore, 
revisit our message passing delay models, the tradeoff between the covertness probability and delay becomes apparent here. 

On the one hand, if Alice delivers the data chunks without data redundancy to relays, then the probability of detection is small. This happens because the chance that Willie ``sees'' her is inversely proportional to the number of nodes. On the other hand, the delay is increased because Bob will have to visit many nodes until he meets all the relays that hold Alice's data chunks.
At the other side of the spectrum, if Alice encodes the data chunks and delivers the coded chunks to more relays, then it will take Bob fewer steps to retrieve it. Thus, the message passing delay may reduce (see conclusions in Sec.~\ref{Sec:messagedelay}). However, the probability that they are caught increases significantly because of the more times that Alice has to deliver to the relays.

\subsection{Notations}
\begin{center}
\begin{small}
    \begin{tabular}{rcl}
       %$P_{meet}$ & -  & probability that the warden meets the source/collector at a relay \\
       $P_d$ & - & detection probability for each transmission\\
       $P_c$ & -  & covertness probability\\
       $\beta$  & -  & the number of wardens\\
       $t_t$ & - & the message chunk passing time\\
       $t_a$ & - & the warden's arrival time
    \end{tabular}
    \end{small}
\end{center}
The parameters $\beta$, $t_t$ and $t_a$ are the system parameters. $P_d$ and $P_c$ are the performance metrics.

\section{Covertness Probability Analysis}
\label{Sec:analysis_cp}

In Sec.~\ref{subsub:covert}, we introduced the covertness probability $P_c$ and provided an example to calculate $P_c$ given the transmission detection probability $P_d$. In practice, $P_d$ is not always a constant. The warden Willie will detect the transmission with a higher probability when he "sees" the transmission for a longer time. In another word, $P_d$ increases with the length of a chunk $\ell$. In this section, we provide the expression of $P_d$ as a function of $\ell$ for two different warden models, respectively. Meanwhile, we provide the general expression of $P_c$.
%Recall that the warden Willie needs to have two steps to detect the transmission: first, he has to "see" the vertex where the transmission happens, and second, he has to detect the transmission successfully. To find the covertness probability, we need to find the detection probability when Alice

\subsection{Transmission Detection Probability}
\label{sec:covert}
%In Sec.~\ref{Sec:warden}, we introduced two detection models: random walk and uniform arrival. For the random walk model, we consider the message length $m$ is small (e.g. in bits), so that the transmission time between the source (or collector) and the relay can be neglected. Then the detection probability directly relates to the chunk size $\ell$. For the uniform arrival model, we consider the message length $m$ is large (e.g. in megabytes) so that the transmission time between the source (or collector) and the relay should be considered. Then the detection probability relates to the transmission time.
\subsubsection{Random Patrolling Model}
\label{subsec:random}
In this model, $\beta$ wardens, a source, and a collector walk randomly on a regular graph. As we introduced in Sec.~\ref{Sec:warden}, the warden firstly needs to meet the source or collector, and then detects the transmission.
%In each time-slot, they will finish their transmission or detection, and move from one vertex to another simultaneously. 
For a $d$-regular graph with $v$ vertices, according to \eqref{Eq:slot}, the probability that the warden meets the source/collector at a relay is $\frac{\beta}{\theta_d v}$. After meeting the source or collector, we consider two scenarios for the warden: a) the warden will detect the transmission immediately. Thus, the detection probability is
\begin{equation}
\label{Eq:ptd1}
    P_{d}=\frac{\beta}{\theta_d v}. %\ \ or \ \ \frac{1}{1+log_2(m/\ell)} 
\end{equation}
The above formula shows that $P_{d}$ only relates to the degree of graph $d$ and the number of wardens $\beta$, thus we consider the detection probability as a constant.

b) The warden will fail to detect with some probability. In practice, the warden can not always detect the transmission successfully for some reasons, e.g., the source tries to hide the message, the noise of the detection channel, etc. It is intuitively to imagine that the longer time for the source or collector to transmit the coded chunk, the higher probability the warden detects successfully. Since the transmission time relates to the length of the coded chunk, here, we simply assume the detection probability is a linear function of $\ell$.
\begin{equation}
\label{Eq:ptd2}
    P_{d}(\ell)=\frac{\ell\beta}{m\theta_d v}.
\end{equation}
Since $\ell=\frac{m}{k}$, we see that $P_d$ is a function of $k$, and it decreases as $k$ increases. 

\subsubsection{Uniform Surveillance Model}
\label{subsec:uniform}
In this model, when a source or a collector arrives at a relay and starts transmitting the data chunk, the warden will arrive (monitor) at this relay following a uniform distribution $U(0,W)$. As we introduced in Sec.~\ref{Sec:warden}, whether the warden will detect the transmission successfully depends on the warden's arrival time $t_a$ and the source/collector's message chunk transmission time $t_{t}$.
We consider the warden's arrival time $t_a$ follows a uniform distribution, i.e., $t_a\sim U(0,W)$. And the message chunk passing time $t_{t}$ follows a shifted exponential distribution i.e., $t_{t}\sim\sexpD(\Delta, \lambda)$ (the tail is given as $\Pr\{t_{t} > x\} = e^{-\lambda(t_t-\Delta)}$ for $t_t > 
\Delta$). Where $\Delta=\ell$ indicates the time to transmit a chunk and the exponential tail is some inherent additive system randomness at each relay, which does not depend on the chunk length $\ell$. 
%Thus, if the warden arrives at the relay during the transmission time, he will detect the transmission immediately. If he arrives after the transmission time, he will fail to detect.   
Thus, the detection probability $P_{d}$ is given by the following theorem:
\begin{theorem}
\label{Th:covert1}
For the uniform surveillance model, %when the message chunk passing time follows a shifted exponential distribution $\sexpD(\ell,\lambda)$ and the warden's arrival time follows a uniform distribution $U(0,W)$, 
the probability that the warden arrives during the transmission (i.e., detects the transmission) is 
\begin{equation}
\label{Eq:detect2}
        P_d(\ell)=\begin{cases}
             \frac{1}{\lambda W}+\frac{\ell}{W}-\frac{e^{-\lambda (W-\ell)}}{\lambda W} & \text{for }  W \ge \ell \\
               \hfil 1  & \text{for }  W < \ell
\end{cases}
\end{equation}
\end{theorem}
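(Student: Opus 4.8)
The plan is to recognize that the whole statement reduces to a single probability computation. The source's transmission occupies the time interval $[0,t_t]$ (it begins the moment the source reaches the relay and lasts for the random duration $t_t$), while the warden inspects the relay at the single random instant $t_a$. Hence detection occurs precisely on the event $\{t_a \le t_t\}$, and the theorem amounts to evaluating $\Pr\{t_a \le t_t\}$ for the two independent variables $t_a \sim U(0,W)$ and $t_t \sim \sexpD(\ell,\lambda)$.

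First I would dispose of the degenerate case $W < \ell$. Since $t_a$ is supported on $[0,W]$ and the shift of the $\sexpD(\ell,\lambda)$ law forces $t_t \ge \ell$ almost surely, we have $t_a \le W < \ell \le t_t$ with probability one. Thus $\{t_a \le t_t\}$ is almost sure and $P_d = 1$, which is the second branch of \eqref{Eq:detect2}.

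For the main case $W \ge \ell$, I would condition on $t_a = x$ and integrate against its uniform density $1/W$ on $[0,W]$, using independence to replace the inner conditional probability by the survival function of $t_t$. The key input is that this survival function is piecewise across the shift: $\Pr\{t_t \ge x\} = 1$ for $x \le \ell$ and $\Pr\{t_t \ge x\} = e^{-\lambda(x-\ell)}$ for $x > \ell$. Splitting the integral at the shift point $x=\ell$ then gives
\begin{equation*}
P_d = \frac{1}{W}\left[\int_0^\ell 1\,dx + \int_\ell^W e^{-\lambda(x-\ell)}\,dx\right],
\end{equation*}
and evaluating the elementary exponential integral yields $\frac{\ell}{W} + \frac{1}{\lambda W} - \frac{e^{-\lambda(W-\ell)}}{\lambda W}$, matching the first branch of \eqref{Eq:detect2}.

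The computation is essentially routine once the detection event is correctly identified, so I do not anticipate a serious obstacle; the two points that demand care are (i) the implicit assumption that $t_a$ and $t_t$ are independent, which is what lets me factor the joint law into the uniform density times the survival function, and (ii) correctly tracking the piecewise form of that survival function so that the integral is split at $x = \ell$ rather than elsewhere. A useful consistency check I would keep in mind is the boundary $W = \ell$: the first branch should reduce to the value $1$ supplied by the second branch, and indeed at $W=\ell$ the two $1/(\lambda W)$ terms cancel and $\ell/W = 1$, confirming $P_d = 1$ there.
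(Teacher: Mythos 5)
Your proposal is correct and follows essentially the same route as the paper: identify detection with the event $\{t_a \le t_t\}$, handle $W<\ell$ by noting $t_t\ge\ell$ almost surely, and for $W\ge\ell$ integrate the survival function of the shifted exponential against the uniform density, splitting at $x=\ell$ (the paper writes this as a double integral, but it is the same computation). Your boundary check at $W=\ell$ is a nice addition not present in the paper, but the argument itself is not a different approach.
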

\begin{proof}
Since $t_{t}\sim \sexpD(\ell,\lambda)$, it is obvious that $t_{t}\ge \ell$. We assume the warden's arrival time is $t_{a}\sim U(0,W)$, then we have $t_{a}\le W$.

If $W<\ell$, the warden will definitely arrive before the transmission is done, the detection probability is $P_d=1$. 
\\[1ex]
If $W> \ell$, we calculate the detection probability as follows:
\begin{small}
    \begin{align*}
        P_d(\ell)&=P(t_{t}\ge t_{a})=\int^{W}_{0}\int^{\infty}_{t_{a}}f_{t}(t_{t})f_{a}(t_{a}) \ dt_{t} \ dt_{a}\\
        &=\int^{W}_{\ell}\int^{\infty}_{t_{a}}f_{t}(t_{t}) \ dt_{t} \ f_{a}(t_{a}) \ dt_{a}\\&+\int^{\ell}_{0}\int^{\infty}_{\ell}f_{t}(t_{t}) \ dt_{t} \ f_{a}(t_{a}) \ dt_{a}\\
        &=\int^{W}_{\ell} e^{-\lambda (t_t-\ell)}f_{a}(t_{a}) \ dt_{a}+\frac{\ell}{W}\\
        &=\int^{W}_{\ell} \frac{1}{W} e^{-\lambda (t_t-\ell)} \ dt_{a}+\frac{\ell}{W}
        =\frac{1}{\lambda W}+\frac{\ell}{W}-\frac{e^{-\lambda (W-\ell)}}{\lambda W}.
    \end{align*}
\end{small}
\end{proof}

\subsection{Covertness Probability}
\label{sub:covertprob}
The communication between the source and collector stays covert only when all the coded chunk transmissions are undetected. Recall that the source needs to disseminate the $n$ coded chunks and the collector needs to collect $k$ coded chunks. Therefore, the total number of coded chunk transmissions is $n+k$ and the covertness probability is 
\begin{equation}
     P_c(k)= (1-P_d)^{n+k}.
 \label{eq:covert1}   
\end{equation}

When $k$ is given, since $P_d$ and $n$ are independent and $0\le P_d\le1$, the covertness probability $P_c$ decreases with increasing $n$.
Since $P_d$ is a function of $\ell$ and $\ell=\frac{m}{k}$, $P_c$ is a function of $k$. Thus, when $n$ is given, we need further analyze the optimal $k$ for each detection model.

For the random patrolling model, $P_d$ is a constant under the worst-case (see \eqref{Eq:ptd1}). Otherwise, $P_d$ increases linearly with $\ell$ (see \eqref{Eq:ptd2}). Thus, we can easily get the expression of $P_c$ by substituting \eqref{Eq:ptd1} or \eqref{Eq:ptd2} to \eqref{eq:covert1}. From the expressions, we find the maximum of the covertness probability in Theorem~\ref{Lm:random}.  
\begin{theorem}
\label{Lm:random}
For the random patrolling model, given the number of coded chunks $n$, the covertness probability $P_c$ changes as follows:
\begin{enumerate}
      \item Under the worst-case, $P_d=\frac{\beta}{\theta_d v}$ is a constant. Thus, for a given $n$, $P_c$ decreases with increasing $k$, which means $P_c$ reaches the maximum at $k=1$.
      \item Otherwise, $P_d(k)=\frac{\beta}{k\theta_d v}$ is a function of $k$. Thus, $P_c$ increases with $k$, which means $P_c$ reaches the maximum at $k=n$.
\end{enumerate}
\end{theorem}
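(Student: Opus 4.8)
The plan is to treat the two cases separately, in both of them starting from the explicit form $P_c(k) = (1-P_d)^{n+k}$ in \eqref{eq:covert1} with $n$ held fixed, and reducing each claim to a monotonicity statement in the single discrete variable $k$. For the first claim, $P_d = \frac{\beta}{\theta_d v}$ from \eqref{Eq:ptd1} is independent of $k$, and in the relevant regime $0 < P_d < 1$, so the base $1-P_d$ lies strictly in $(0,1)$. Since the exponent $n+k$ is strictly increasing in $k$, raising a base in $(0,1)$ to a larger power strictly decreases the value; hence $P_c$ is strictly decreasing and attains its maximum at the smallest admissible value $k=1$. This case needs nothing beyond the monotonicity of $t \mapsto a^t$ for $0<a<1$.

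For the second claim the situation is more delicate, and this is where I expect the only genuine obstacle. Substituting $\ell = m/k$ into \eqref{Eq:ptd2} gives $P_d(k) = \frac{\beta}{k\theta_d v}$, so now both the base and the exponent of $P_c(k) = \bigl(1 - \frac{\beta}{k\theta_d v}\bigr)^{n+k}$ depend on $k$, and the two effects push in opposite directions: as $k$ grows the base increases toward $1$ (favoring larger $P_c$), while the exponent also grows (favoring smaller $P_c$). To resolve this competition I would pass to the logarithm. Writing $c = \frac{\beta}{\theta_d v}$ and $g(k) = \ln P_c(k) = (n+k)\,\ln\!\left(1 - \frac{c}{k}\right)$, and treating $k$ as a continuous variable (legitimate, since monotonicity of the continuous extension forces monotonicity of the integer sequence), I would compute
\[
g'(k) = \ln\!\left(1-\frac{c}{k}\right) + (n+k)\,\frac{c}{k(k-c)}.
\]
Here the first term is negative and the second positive, so the crux is to show the positive term dominates.

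The key step is the elementary inequality $\ln(1-x) \ge -\frac{x}{1-x}$, valid for $0 \le x < 1$ (both sides vanish at $x=0$ and a derivative comparison gives the ordering on $[0,1)$). Applying it with $x = c/k$ yields $\ln\!\left(1 - \frac{c}{k}\right) \ge -\frac{c}{k-c}$, so that
\[
g'(k) \ge -\frac{c}{k-c} + (n+k)\,\frac{c}{k(k-c)} = \frac{c}{k-c}\left(\frac{n+k}{k} - 1\right) = \frac{cn}{k(k-c)} > 0,
\]
using $c>0$, $n>0$, and $k>c$. Thus $g$, and therefore $P_c = e^{g}$, is strictly increasing in $k$, so $P_c$ is maximized at the largest admissible value $k=n$, as claimed. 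The one subtlety to flag is ensuring $k > c$ throughout the range $1 \le k \le n$ so that the denominators never vanish and the sign assertions hold; this is immediate because $c = \beta/(\theta_d v) \le 1$ in any reasonable regime, the number of wardens being small relative to the number of vertices $v$.
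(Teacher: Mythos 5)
Your proof is correct, and part (2) takes a genuinely different route from the paper's. The paper argues discretely: it compares consecutive terms, reducing $P_c(a+1)>P_c(a)$ to the inequality $\frac{a(a+1-B)}{(a-B)(a+1)}>\sqrt[n+a]{\tfrac{a+1}{a+1-B}}$, which it then establishes by rewriting the left side as $1+\frac{B}{(a-B)(a+1)}$, lower-bounding it by $\frac{1}{n+a}\bigl(\frac{a+1}{a+1-B}+n+a-1\bigr)$, and invoking the arithmetic--geometric mean inequality on $n+a$ terms. You instead pass to $\ln P_c$, extend $k$ to a continuous variable, and show $g'(k)>0$ via the standard bound $\ln(1-x)\ge -\frac{x}{1-x}$; the competition between the growing base and the growing exponent is resolved in one line, $g'(k)\ge \frac{cn}{k(k-c)}>0$. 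Your calculus argument is shorter and makes the mechanism more transparent (it isolates exactly why the base effect wins), while the paper's argument stays entirely within the integers and avoids any appeal to differentiability. Both share the same boundary caveat: you need $c=\beta/(\theta_d v)$ strictly less than $k$ for the denominators, which at $k=1$ requires $c<1$ rather than $c\le 1$; the paper's proof has the identical implicit assumption (its expression $\frac{a}{a-B}$ degenerates at $a=B=1$), so this is not a gap relative to the paper, but it is worth stating $c<1$ explicitly. Part (1) of your argument is the same one-line monotonicity observation as in the paper.
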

\begin{proof}
For property 1), since $P_d$ is a constant and $0\le 1-P_d<1$, it is obvious that $P_c$ decreases with $k$. Since $k\ge 1$ is an integer, the maximum covertness probability is at $k=1$, i.e., $P_c=(1-P_d)^{n+1}$.

For property 2), we know $1\le k\le n$ is an integer. Let $a\in[1,n)$ be an integer. To prove $P_c(k)$ increases with $k$, we only need to show $P_c(k=a+1)>P_c(k=a)$ for all $a$.

According to \eqref{Eq:ptd2} and $\ell=\frac{m}{k}$, we have  $P_d(k)=\frac{\beta}{k\theta_d v}$. Define $B=\frac{\beta}{\theta_d v}$, then $P_d(k)=\frac{B}{k}$. Therefore, we only need to show $(1-B/(a+1))^{n+a+1}>(1-B/a)^{n+a}$.

Notice that $0<B\le 1$ and $1\le a<n$, we can transform the inequality into $\frac{a}{a-B}>\frac{a+1}{a+1-B}\sqrt[n+k]{\frac{a+1}{a+1-B}}$. Thus, we get $ \frac{a(a+1-B)}{(a-B)(a+1)}>\sqrt[n+k]{\frac{a+1}{a+1-B}}$.

For the left term, we have $\frac{a(a+1-B)}{(a-B)(a+1)}=1+\frac{B}{(a-B)(a+1)}>1+\frac{B}{(a-B+1)(a+n)}=\frac{1}{n+a}(\frac{a+1}{a+1-B}+n+a-1)$.

According to the arithmetic mean-geometric mean inequality,
$\frac{1}{n+a}(\frac{a+1}{a+1-B}+n+a-1)>\sqrt[n+k]{\frac{a+1}{a+1-B}}$.

Finally, for any integer $1\le a<n$, we have $P_c(k=a+1)>P_c(k=a)$.
\end{proof}

For the uniform surveillance model, since $P_d$ is a function of $k$, $P_c$ is also a function of $k$. From Theorem~\ref{Th:covert1}, the term $\frac{\ell}{W}$($=\frac{m}{kW}$) decreases as $k$ increases, and the term $-\frac{e^{-\lambda (W-\ell)}}{\lambda W}$($=-\frac{e^{-\lambda (W-m/k)}}{\lambda W}$) increases with $k$. When $\lambda$ is sufficiently large, the former term is much larger than the latter term. Thus, $P_d$ also decreases as $k$ increases. Since $P_c$ decreases as $k$ increases when $P_d$ is given, there exists a tradeoff for $k$ to maximize $P_c$. When $\lambda$ is sufficiently small, the former term is smaller than the latter term. Thus, $P_d$ increases with $k$, which leads to $P_c$ reaches its maximum at $k=1$.

\section{Delay vs. Covertness Probability Tradeoffs}
\label{Sec:tradeoff}
In the previous sections, we know that introducing data redundancy reduces both the message passing delay and the covertness probability. Thus, it is important to study how much redundancy affects the delay and probability tradeoffs. %According to the properties of two delay models and two detection models, we decide to numerically analyze two combinations of the delay and detection models. 

\subsection{Constant Transmission Time vs. Random Patrolling}
\label{Sec:rw}
Since both the random patrolling model and the constant transmission time model do not relate to the message chunk passing time, we combine these two models to analyze the tradeoff between the covertness probability $P_c$ and the expected message passing delay $\mathbb{E}[T_{A+B}]$.

\begin{figure}[hbt]
    \centering
    \includegraphics[width=0.46\textwidth]{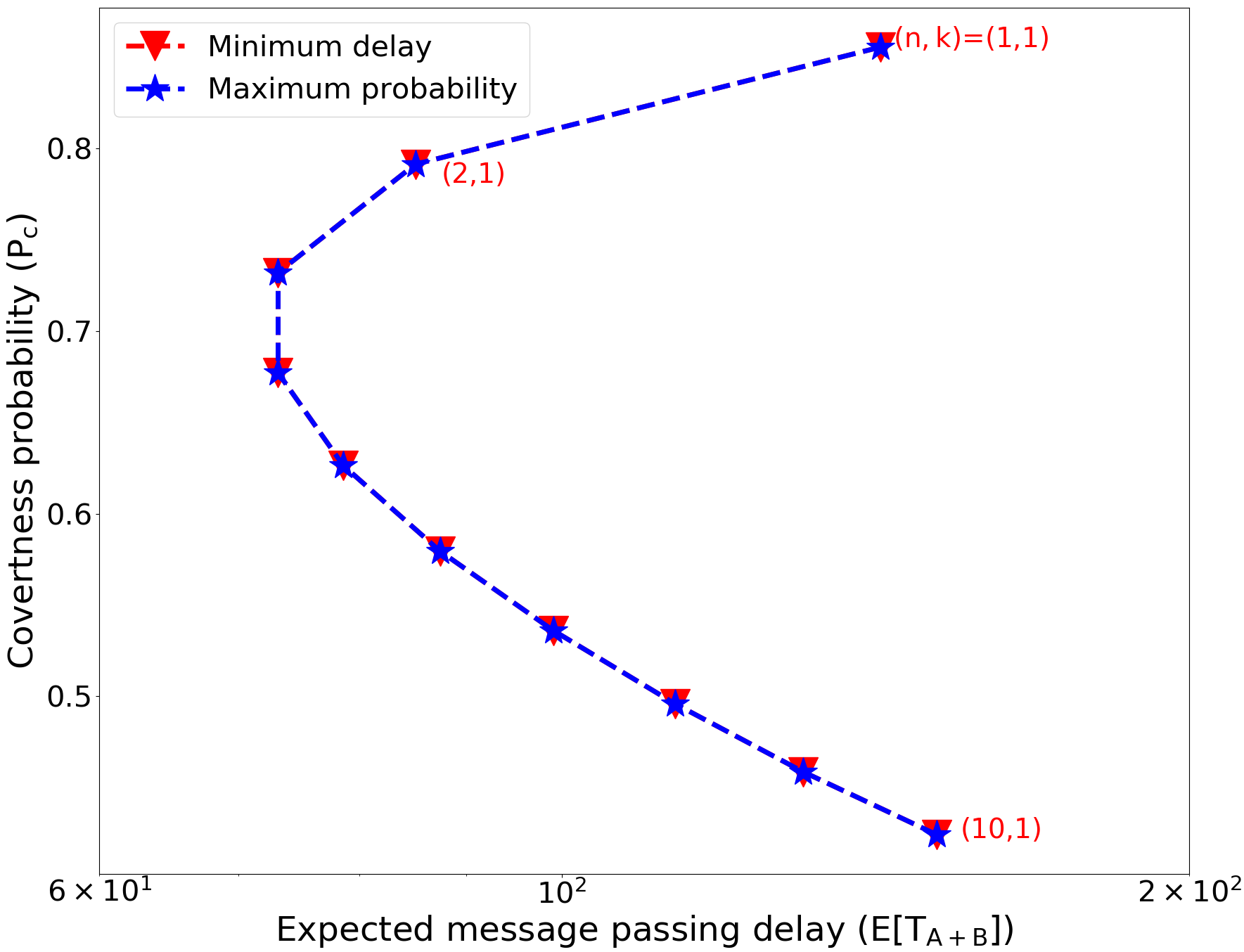}
    \caption{Covertness probability $P_c$ (cf. \eqref{Eq:small}) vs. the expected number of steps $\mathbb{E}[T_{A+B}]$ (cf. \eqref{eq:covert1}) as $n$ increases from $1$ to $10$. We consider the worst-case, i.e., the detection probability is a constant (cf. \eqref{Eq:ptd1}). This is a regular graph with $100$ vertices and the degree is $5$. The number of wardens is $\beta=10$ and the number of relays is $r=15$. There exists an $(n,k)$ which simultaneously maximizes $P_c$ and minimizes $\mathbb{E}[T_{A+B}]$.}
    \label{fig:rw-worst}
\end{figure}
In Fig.~\ref{fig:rw-worst}, we evaluate \eqref{Eq:small} and \eqref{eq:covert1} to see the tradeoff between the detection probability and the transmission delay for the worst case warden's detection probability $P_{d}=\frac{\beta}{\theta_d v}$. We consider a regular graph with $v=100$ vertices and the degree is $d=5$. There are $15$ relays distributed uniformly on vertices of the graph. $10$ wardens, $1$ source and $1$ collector walk randomly on the graph. We evaluate $P_c$ vs. $\mathbb{E}[T_{A+B}]$ as the number of coded chunks $n$ increases from $1$ to $10$. Each point in the figure is a different case with different values of $(n,k)$. Two different strategies are considered: minimum delay and maximum probability. For the minimum delay, we firstly find the optimal $k$ which minimizes $\mathbb{E}[T_{A+B}]$ given each $n$. Then we calculate $P_c$ by using the same $k$. For the maximum probability, on the contrary, we firstly find the optimal $k$ for $P_c$ given each $n$, and then calculate $\mathbb{E}[T_{A+B}]$. From Fig.~\ref{fig:rw-worst}, we observe that the results for both strategies coincide with each other. It means that the optimal $k$ simultaneously maximizes $P_c$ and minimizes $\mathbb{E}[T_{A+B}]$. 
We also observe that $P_c$ decreases with increasing $n$ and $\mathbb{E}[T_{A+B}]$ reaches the minimum at $n=4$. 
These observations are consistent with Theorems~\ref{Lm:RW-opt-n} and \ref{Lm:random}. Finally, we conclude that $k=1$, i.e., no splitting, is the overall optimal strategy. Replication ($n>1$ and $k=1$) can reduce the transmission delay, but it also decreases the covertness probability.

\begin{figure}[hbt]
    \centering
    \includegraphics[width=0.46\textwidth]{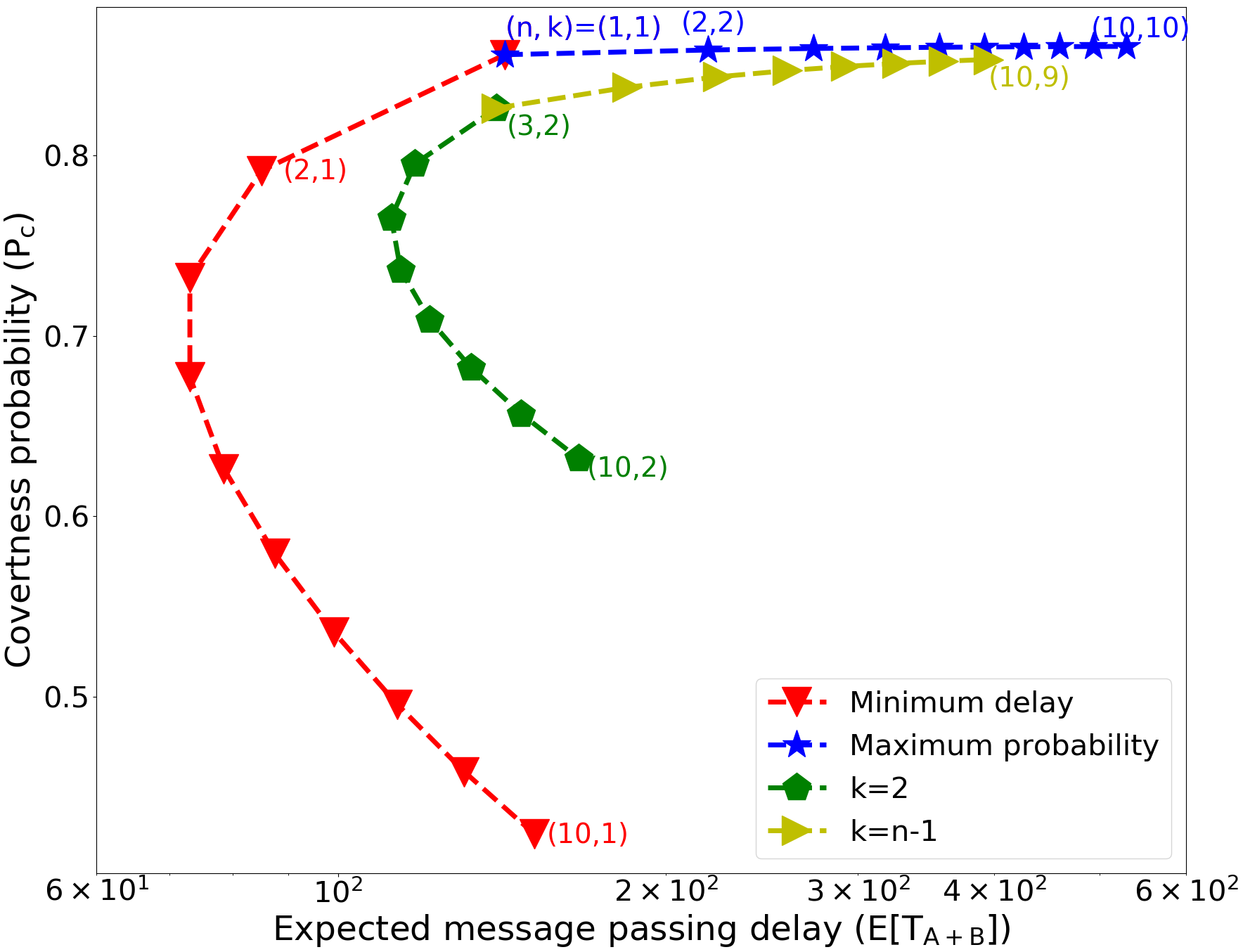}
    \caption{Covertness probability $P_c$ (cf. \eqref{Eq:small}) vs. the expected message passing delay $\mathbb{E}[T_{A+B}]$ (cf. \eqref{eq:covert1}) as $n$ increases from $1$ to $10$. We consider the detection probability is a function of $\ell$ (cf. \ref{Eq:ptd2}). This is a regular graph with $100$ vertices and the degree is $5$. The number of wardens is $\beta=10$ and the number of relays is $r=15$. It is impossible to reach the maximum $P_c$ and the minimum $\mathbb{E}[T_{A+B}]$ simultaneously.}
    \label{fig:rw-linear}
\end{figure}
In Fig.~\ref{fig:rw-linear}, we consider the warden's detection probability increases linearly with the length of data chunk $\ell$, i.e., $P_{d}=\frac{\ell\beta}{\theta_d v}$ (where $\ell=\frac{m}{k}$). The same regular graph is adopted from Fig.~\ref{fig:rw-worst}. We evaluate $P_c$ vs. $E[T_{A+B}]$ as $n$ increases from $1$ to $10$. Four different strategies are considered: minimum delay, maximum probability, "$k=2$" and "$k=n-1$". For the minimum delay, we have the same observations as them in Fig.~\ref{fig:rw-worst}. This is because the expression of $E[T_{A+B}]$ is unchanged, and it still reaches minimum at $(n,k)=(4,1)$. For the maximum probability, $P_c$ reaches minimum at $k=n$, which means splitting is optimal. It is obvious that the curves of the minimum delay and the maximum probability are far from each other. Therefore, we can not simultaneously decrease the message passing delay and increase the covertness probability. In the figure, we also see the results for "$k=2$" and "$k=n-1$" strategies. When we have a minimum requirement for $P_c$ or $E[T_{A+B}]$, then $k=2$ or $k=n-1$ strategies may provide a better performance. For example, if we require $P_c>0.8$, then the case $(n,k)=(4,2)$ gives a relatively smaller $E[T_{A+B}]$.

\subsection{Random Transmission Time vs. Uniform Surveillance}
\label{Sec:ua}
Since both the uniform surveillance detection model and the random transmission time model relate to the message chunk passing time, we combine these two models to analyze the tradeoff between the covertness probability $P_c$ and the expected message passing delay $\mathbb{E}[T_{A+B}]$.

\begin{figure*}[hbt]
    \centering
    \includegraphics[width=0.474\textwidth]{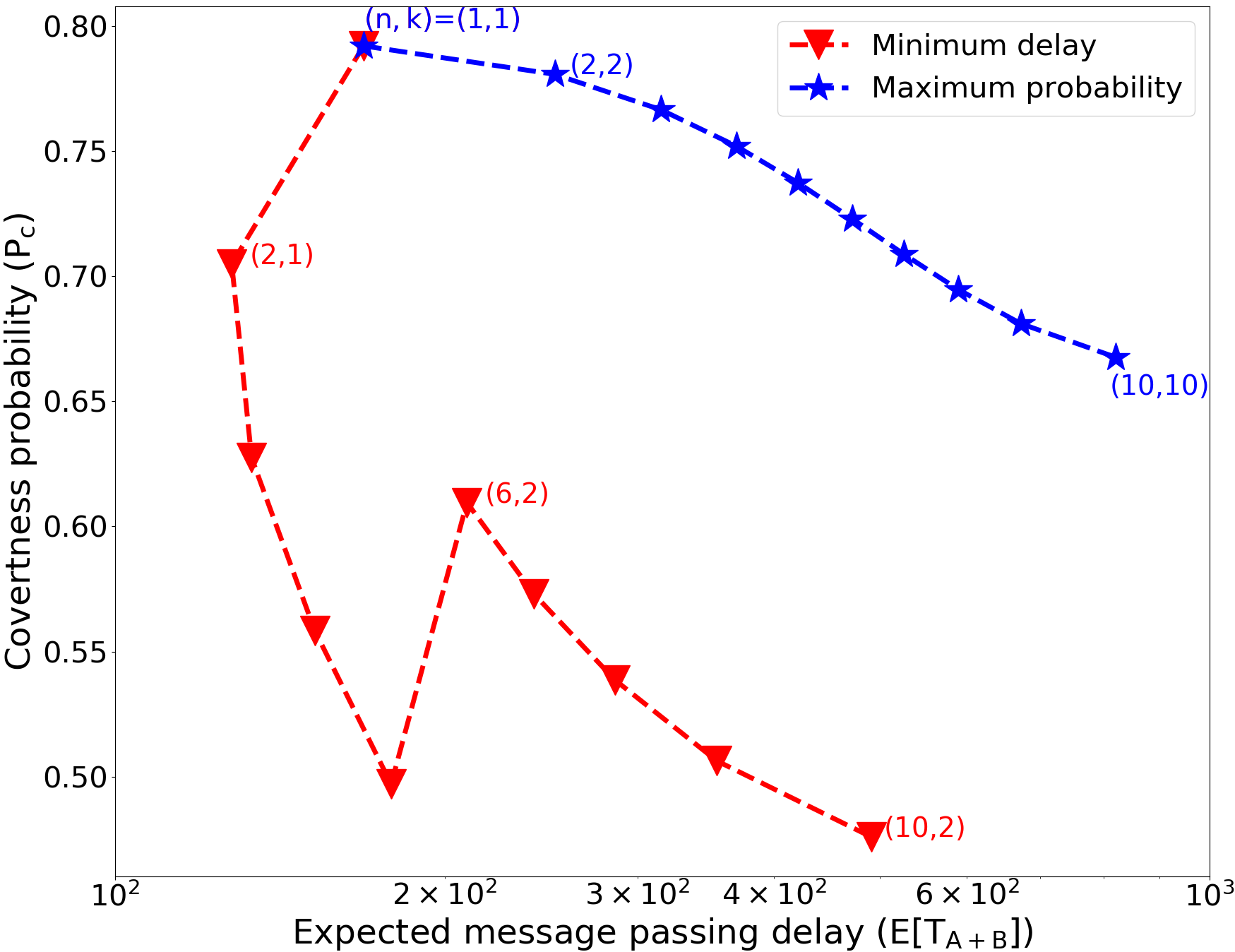}
    \includegraphics[width=0.46\textwidth]{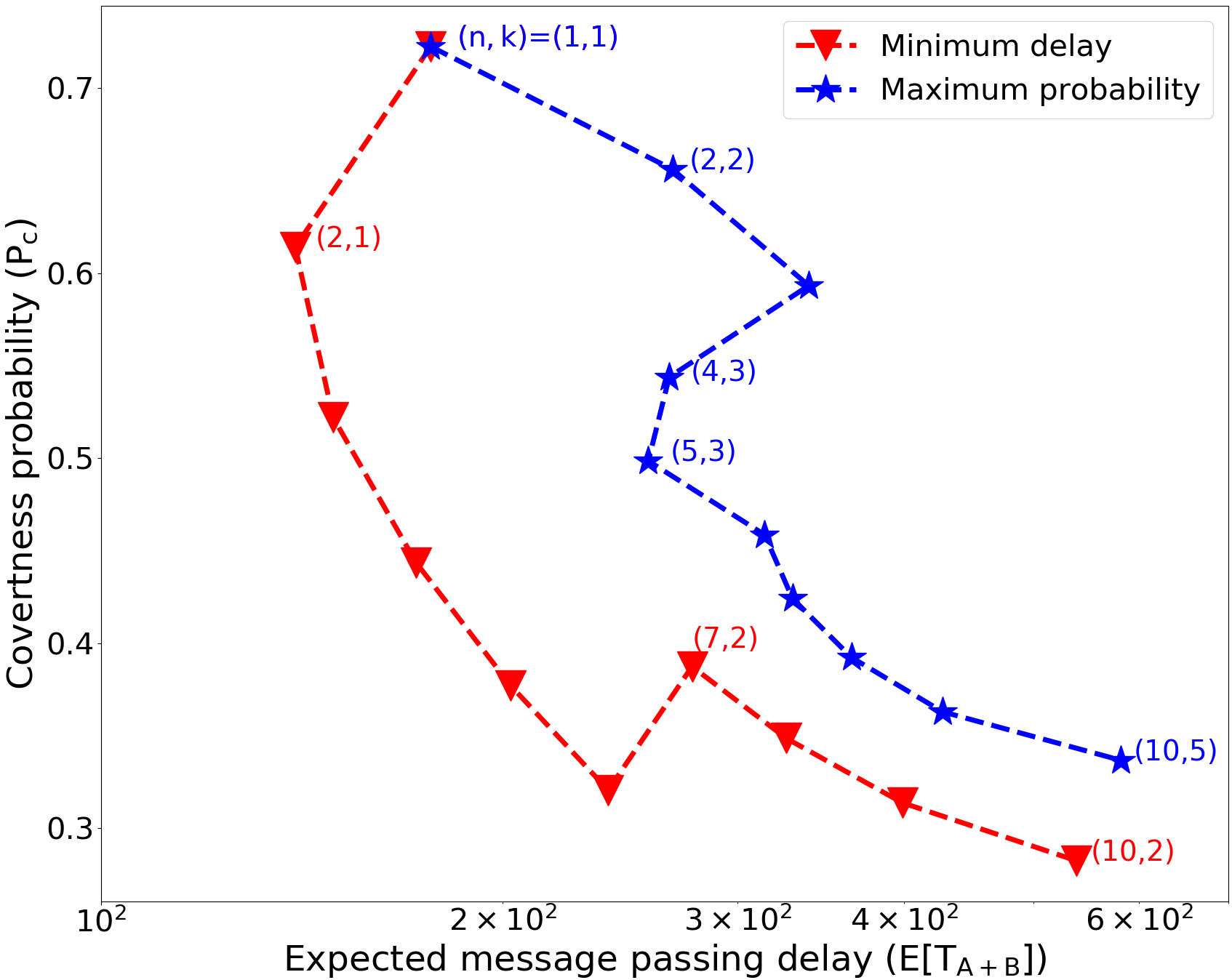}
    \caption{Covertness probability $P_c$ (cf. \eqref{Eq:model1}) vs. the expected message passing delay $\mathbb{E}[T_{A+B}]$ (cf. \eqref{Eq:detect2} and \eqref{eq:covert1}) as $n$ increases from $1$ to $10$. The left subfigure has a message chunk passing time $t_{t}\sim\sexpD(10/k,1)$ and the right subfigure has the $t_{t}\sim \sexpD(10/k, 0.2)$. This is a regular graph with $100$ vertices and the degree is $5$. The number of relays is $r=10$. The warden's arrival time follows $U(0,100)$. 
    It is impossible to reach maximum $P_c$ and minimum $\mathbb{E}[T_{A+B}]$ simultaneously. Sacrifice some covertness probability may bring a significant improvement to the message passing delay.}
    \label{fig:ua-com1-p1}
\end{figure*}
In Fig.~\ref{fig:ua-com1-p1}, we evaluate \eqref{Eq:model1}, \eqref{Eq:detect2} and \eqref{eq:covert1}   to see the tradeoff between the detection probability and the expected message passing delay. The same regular graph is adopted from Fig.~\ref{fig:rw-worst}. The graph on the left has a message chunk passing time $t_{t}\sim\sexpD(10/k,1)$ and the graph on the right has a message chunk passing time $t_{t}\sim \sexpD(10/k, 0.2)$. The warden's arrival time follows $U(0,30)$. We evaluate $P_c$ vs. $E[T_{A+B}]$ as $n$ increases from $1$ to $10$. Each point in the figure is a different case with a different value of $(n,k)$. Similarly, the minimum delay and maximum probability strategies are considered.

The left subfigure shows that given the number of coded chunks $n$, minimum delay and maximum probability strategies have very different optimal $k$ values. For minimum delay, the optimal $k$ is small (i.e., $1$ or $2$); For maximum probability, the optimal $k$ is equal to $n$. Therefore, we conclude that it is impossible to simultaneously increase the covertness probability and reduce the message passing delay. In practice, the appropriate $(n,k)$ is also decided by the requirements for the performance metrics, e.g., the system requires $P_c>0.7$ or $E[T_{A+B}]<170$. 
The right subfigure shows that given the number of coded chunks $n$, when $n$ is small (e.g., $n<5$), minimum delay and maximum probability strategies have very different optimal $k$ values; When $n$ is large (e.g., $n>6$), they have very closed optimal $k$ values. When $n>6$, the optimal $k$ for minimum delay is equal to $2$ and the optimal $k$ for maximum probability is equal to $5$. Therefore, we conclude that as the variance of $t_{t}$ increases ($1/\lambda^2$ becomes smaller), the maximum probability and minimum delay strategies will finally be the same. The conclusion indicates that when $\lambda$ is sufficiently small, we can simultaneously increase the covertness probability and reduce the message passing delay. 

From both subfigures, we see that the minimum delay strategy leads to similar results, but the maximum probability strategy leads to very different results. This is because $P_c$ is sensitive to the value of $\lambda$, but $E[T_{A+B}]$ is not. We analyzed how $P_c$ changes with $\lambda$ in Sec.~\ref{sub:covertprob}. According to the expression of $E[T_{A+B}]$ given in \eqref{Eq:model1}, although decreasing $\lambda$ from $1$ to $0.2$ can increase the former term $\frac{n+k}{\lambda}$, the latter term $\theta_d v(H_r+H_n-H_{r-n}-H_{n-k})$ is far larger than the former. Thus, the results will not change much with $\lambda$.
Notice that both $P_c$ and $E[T_{A+B}]$ reach the optimal values when $n$ and $k$ are small. However, that may not be the case when the message length $m$ is sufficiently large. Recall that the detection probability $P_d=1$ (in Theorem~\ref{Th:covert1}) when $\ell>W$, then the covertness probability $P_c=0$. Since $\ell=\frac{m}{k}$, when $m\gg W$ (e.g., $m=3W$), a small $(n,k)$ value (e.g., $(4,2)$) leads to $P_c=0$. Therefore, it is necessary to study how the $P_c$ and $E[T_{A+B}]$ trade-off changes when $n$ and $k$ are large in Fig.~\ref{fig:ua-com1-p1}.

\section{Conclusions and Future Work}
We introduced and studied a gossip-like protocol for covert passing messages between Alice and Bob as they move in an area hosting a multitude of Io$\beta$T objects. Alice and Bob perform random walks on random regular graphs. The Io$\beta$T objects reside on the vertices of this graph, and some can serve as relays between Alice and Bob.  In our protocol, Alice splits her message into small chunks, which she can covertly deposit to the relays she encounters. Afterward, Bob collects the chunks. Alice may encode her data before the dissemination. The area where the message passing takes place is watched over by a warden Willie. Willie can either perform random walks as Alice and Bob do or conduct uniform surveillance of the area. In either case, he can only observe one relay at a time. We evaluated the system performance by the covertness probability and the message passing delay. These performance metrics depend on the graph, communications delay, and code parameters.
We showed that, in most scenarios, it is impossible to choose the design parameters that maximize the covertness probability and minimize the message delay simultaneously.

This work sets the stage for many problems of interest to be studied in the future. We briefly describe five directions of immediate interest.

\subsubsection{Reducing Delay with Multiple Sources and Collectors}
According to Section~\ref{sub:multi}, although many random walks are faster than one \cite{alon2008many}, simply introducing multiple sources and collectors may not reduce the dissemination and collection time significantly. Fountain codes need further study as a possible way to improve the performance.

\subsubsection{Computing the Overall Message Passing Time}
As we discussed in Section~\ref{subsec:starting}, the overall time the message spends in the system depends on when the collection and dissemination start. For example, Alice and Bob can start their walks each day at some specified time, or collecting can start before the dissemination is over. Extending this work to include the dynamics on when to start each process is an exciting problem.

\subsubsection{Extending Analysis to other Detection Models and Mobility Patterns}
Some other detection models for such systems are reasonable but have not been studied yet. For example, regarding the warden, it is reasonable to assume that Willie needs to spend a certain time before being able to detect the transmission or that he can, over time, learn which nodes do not have relays.
Regarding the mobility patterns, random walks on irregular graphs or some other area traversing models are of interest.
\subsubsection{Computing trade-off Between False-alarm and Missed-detection}
Some IoT devices may help Alice and Bob achieve covert communication. For example, some devices can deceive the warden into wrongly accusing Alice and Bob of the message passing. In such scenarios, we should consider the tradeoff between false-alarm and missed-detection.
Another scenario where false-alarm and missed-detection are of interest is when the transmission is organized through incremental data redundancy in a classical way (see, e.g., 
\cite{harq:LiuSS03, harq:VarnicaSW05}). Rather than assuming a perfect detection of the message chunk, we here assume that each relay sends a very noisy version of the entire message to reduce the detection probability. The collector then recovers the message by, e.g., Chase combining of the received noisy versions.

\subsubsection{Studying Systems with Unreliable Relays }
In the Io$\beta$T scenarios, we expect some of the relays to be adversarial. In general, these objects have power constraints, and we may not rely on all of them to provide the required storage service.  Including such impairments is of interest for further study.

\label{Sec:conclusion}

%\subsection{Covertness(Model 1)}
%In this subsection, we also assume the covert communication in a complete graph with 50 vertices. We want to show how the probability of covertness changes according to different number of message chunks $k$. We know the $k$ shows that how many times we partition the original message. 

%\begin{figure}[t]
%    \centering
%    \includegraphics[scale=0.2]{covertness.png}
%    \caption{}
%    \label{fig:covertness}
%\end{figure}
%From figure \ref{fig:covertness}, we can easily see that when $\frac{m}{t_1}$ is small, it's better not to partition the message (e.g. $\frac{m}{t_1}=1,\ 2$). When $\frac{m}{t_1}$ is large, to partition the message can significantly increase the probability of covertness, but too many partitioned chunks will decrease the performance. Recall the definitions of $m$ and $t_1$, we can conclude two conclusions. First, when the transmission time is small, the covert communication can not provide any benefit; When the transmission time is large, the covert communication can increase the probability of covertness. Second, for the Warden, it's always better to make the detection time on each vertex small.

\begin{comment}
\section*{Acknowledgments}
This research is based upon work supported by the National Science Foundation 
under Grant No.~SaTC-1816404.
\end{comment}

% Can use something like this to put references on a page
% by themselves when using endfloat and the captionsoff option.
\ifCLASSOPTIONcaptionsoff
  \newpage
\fi

\bibliographystyle{IEEEtran}
\bibliography{biblio.bib}

\end{document}